\title{Sample complexity of hidden subgroup problem} 
\titlerunning{Sample complexity of hidden subgroup problem} 
\author{Zekun Ye}{Institute of Quantum Computing and Computer Science Theory, School of Computer Science and Engineering, Sun Yat-sen University, Guangzhou 510006, China\\yezekun@mail2.sysu.edu.cn}{}{}{}
\author{Lvzhou Li}{Institute of Quantum Computing and Computer Science Theory, School of Computer Science and Engineering, Sun Yat-sen University, Guangzhou 510006, China \and Ministry of Education Key Laboratory of Machine Intelligence and Advanced Computing (Sun Yat-sen University), Guangzhou 510006, China \\lilvzh@mail.sysu.edu.cn}{lilvzh@mail.sysu.edu.cn}{}{}
\authorrunning{Z. Ye and L. Li} 
\keywords{hidden subgroup problem, sample complexity, finite group, quantum computing} 
\begin{document}

\maketitle

\begin{abstract}
The hidden subgroup problem ($\mathsf{HSP}$) has been attracting much attention in quantum computing, since several well-known quantum algorithms including Shor algorithm can be described in a uniform framework as quantum methods to address different instances of it. One of the central issues about $\mathsf{HSP}$ is to characterize its quantum/classical  complexity.  For example, from the viewpoint of learning theory, sample complexity is a crucial concept.  However, 
while the quantum sample complexity of the problem has been studied,  a full characterization of the  classical sample complexity of $\mathsf{HSP}$ seems to be absent, which  will thus be the topic in this paper.  $\mathsf{HSP}$ over a  finite group is defined as follows: For a finite group $G$ and a finite set $V$, given a function $f:G \to V$ and the promise that for any $x, y \in G, f(x) = f(xy)$ iff $y \in H$ for a subgroup $H \in \mathcal{H}$, where $\mathcal{H}$ is a set of candidate subgroups of $G$, the goal is to identify $H$. Our contributions are as follows:
\begin{enumerate}[i)]
  \item For $\mathsf{HSP}$, we show that the number of    uniform  examples necessary to learn the hidden subgroup with bounded error is at least $\Omega\left(\max \left\{\min\limits_{H \in\mathcal{H}}\frac{\log |\mathcal{H}|}{\log \frac{|G|}{|H|}}, \min\limits_{H \in\mathcal{H}}\sqrt{\frac{|G|}{|H|}\frac{\log |\mathcal{H}|}{\log \frac{|G|}{|H|}}}\right\}\right)$, and on the other hand, $O\left(\max\limits_{H \in \mathcal{H}} \left\{sr(\mathcal{H}),\sqrt{\frac{|G|}{|H|}sr(\mathcal{H})}\right\}\right)$   uniform examples are sufficient, where 
$sr(\mathcal{H}) = \max\limits_{H \in \mathcal{H}}\mathsf{r}(H)$ and $\mathsf{r}(H)$ is the rank of $H$.
  \item  By concretizing the parameters of $\mathsf{HSP}$, we consider a class of restricted Abelian hidden subgroup problem ($\mathsf{rAHSP}$) and obtain the upper and lower bounds for the sample complexity of $\mathsf{rAHSP}$. 
  \item  We continue to discuss a special case of $\mathsf{rAHSP}$, generalized Simon's problem ($\mathsf{GSP}$), and show that the sample complexity of $\mathsf{GSP}$ is $\Theta\left(\max\left\{k,\sqrt{k\cdot p^{n-k}}\right\}\right)$. Thus we obtain a complete characterization of the sample complexity of $\mathsf{GSP}$. 
\end{enumerate}
\end{abstract}
\section{Introduction}
\subsection{Background}
\textbf{Hidden subgroup problem}. The hidden subgroup problem plays an important role in the history of quantum computing. 
Several important quantum algorithms such as Deutsch-Jozsa algorithm \cite{deutsch1992rapid}, Simon algorithm \cite{simon1994power}, and Shor algorithm \cite{shor1994algorithms} have a uniform description in the framework of the hidden subgroup problem \cite{jozsa1998quantum}. Moreover, many quantum algorithms were proposed for the instances of the hidden subgroup problem, e.g., \cite{bacon2005from,childs2007quantum,ettinger2004the,grigni2001quantum,hallgren2003the,kempe2005the,kuperberg2005a}.

The hidden subgroup problem consists of the Abelian hidden subgroup problem and the non-Abelian hidden subgroup problem. Many problems are special cases of the Abelian hidden subgroup problem, such as Simon's problem \cite{simon1994power}, generalized Simon's problem \cite{Ye2019query} and some important number-theoretic problems \cite{Hallgren2007Polynomial,Hallgren2005Fast,Schmidt2005Polynomial}. The non-Abelian hidden subgroup problem also received much attention \cite{grigni2001quantum,IvanyosMS01,kuperberg2005a,FriedlIMSS03,hallgren2003the,MooreRRS07,Regev04}. While there exist efficient quantum algorithms to solve the Abelian hidden subgroup problem \cite{simon1994power,BonehL95,Kitaev1995quantum,brassard1997exact,MoscaE98,EisentragerHK014}, many instances of the non-Abelian hidden subgroup problem are not known to have efficient quantum algorithms, such as the dihedral hidden subgroup problem and the symmetric hidden subgroup problem \cite{kuperberg2005a, Decker2013Hidden}. 

There exist two versions of the hidden subgroup problem: the identification and decision versions. The task of the identification version is to identify the hidden subgroup, whereas the task of the decision version is to decide whether the hidden subgroup is the trivial group or not. In this paper, all the hidden subgroup problems mentioned belong to the identification version without special instructions. 
\\
\\
\noindent
\textbf{Sample and query complexity}. In learning theory, there are two types of learning models: passive learning and active learning \cite{Hanneke14,BalcanHV10}. In passive learning, the algorithm can only receive random labeled examples in a passive way; in active learning, the algorithm can interactively ask for the labels of examples of its own choosing. Thus, active learning may enable us to design more powerful algorithms compared to passive learning for the same problem.

Specifically, we will focus on the task to learn the property about some function $f$ with high success probability in the following. In passive learning, an algorithm can obtain i.i.d. random labeled examples $(x, f(x))$, where $x$ is distributed according to a certain probability distribution. Such an algorithm is called a sample algorithm. The sample complexity of a sample algorithm is the maximum number of i.i.d. random examples needed for learning the property about $f$ in the worst case. On the other hand, in active learning, an algorithm is allowed to make queries. The algorithm can choose some $x$ to learn $f(x)$ in each query. Such an algorithm is called a query algorithm. The query complexity of a query algorithm is the maximum number of queries needed for learning the property about $f$ in the worst case. The \textit{sample} (\textit{query}) \textit{complexity} of a learning problem is the sample (query) complexity of the optimal sample (query) algorithm. 

A query algorithm may make one query at a time, using the information from previous queries to decide which example to query next. Thus, a query algorithm may cost less than a sample algorithm. That is, an upper bound on the sample complexity is always an upper bound on the query complexity for the same problem, but a lower bound on the sample complexity is not necessarily a lower bound on the query complexity. 

Much work analyzed the quantum advantage via the sample complexity \cite{ArunachalamW17}. For example, the concept class of DNF-formulas and ($\log n$)-juntas can be learned in polynomial time from quantum examples under the uniform distribution \cite{BshoutyJ99, AticiS07}, whereas the best known classical algorithm of these two problems both runs in quasi-polynomial time under the uniform examples \cite{Verbeurgt90, MosselOS04}. Arunachalam and de Wolf \cite{Arunachalam2017optimal} proved that quantum and classical sample complexity are equal up to constant factors in both the PAC and agnostic models.
Arunachalam et al. \cite{Arunachalam2019two} showed a $k$-Fourier-sparse $n$-bit Boolean function can be learned from $O(k^{1.5}(\log k)^2)$ uniform quantum examples for that function, whereas $\Omega(nk)$ uniform examples are necessary in the classical case \cite{HavivR16}. 
\\
\\

\noindent
\textbf{Sample complexity of the hidden subgroup problem}. The quantum sample complexity of the hidden subgroup problem has been attracting great attention. Bacon et al. \cite{bacon2005from} presented the quantum sample complexity of the hidden subgroup problem over semidirect product group $A \rtimes \mathbb{Z}_p$ is $\Theta(\frac{\log |A|}{\log p})$, where $A$ is any Abelian group and $p$ is a prime. Ettinger et al. \cite{ettinger2004the} proved that the quantum sample complexity of the hidden subgroup problem over any finite group $G$ is $O(\log^2 |G|)$. In terms of the candidate set $\mathcal{H}$ of hidden subgroups, Moore and Russell \cite{MooreR07} showed the quantum sample complexity of a wide class of the hidden subgroup problem is $O(\log |\mathcal{H}|)$. Furthermore, Hayashi et al. \cite{HayashiKK08} proved that the quantum sample complexity of the hidden subgroup problem is at most $O\left(\frac{\log |\mathcal{H}|}{\log \min_{H \neq H' \in \mathcal{H}}(|H|/|H \cap H'|)}\right)$ and at least $\Omega\left(\frac{\log |\mathcal{H}|}{\log \max_{H \in \mathcal{H}}|H|}\right)$; if all the candidate subgroups in $\mathcal{H}$ have the same prime order $p$, then the quantum sample complexity is $\Theta(\frac{\log \mathcal{H}}{\log p})$. 
 
On the other hand, to our knowledge, almost no related direct result has been obtained in terms of the classical sample complexity of the hidden subgroup problem. However, there exists only some discussion about the classical query complexity for some instances of the hidden subgroup problem. For example, the classical query complexity of Simon's problem was proven to be $\Theta(\sqrt{2^n})$ \cite{simon1994power,cai2018optimal,Wolf2019quantum}. Ye et al. \cite{Ye2019query} proved that a nearly optimal bound for the classical query complexity of generalized Simon's problem ($\mathsf{GSP}$). For the order-finding problem over $\mathbb{Z}_{2^m} \times \mathbb{Z}_{2^n}$, Cleve \cite{Cleve04} proved that the deterministic query complexity is at least $\Omega\left(\sqrt{\frac{2^n}{m}}\right)$, and the bounded-error query complexity is at least $\Omega\left(\frac{2^{n/3}}{\sqrt{m}}\right)$. Kuperberg \cite{kuperberg2005a} proved the classical query complexity of the dihedral hidden subgroup problem over the dihedral group $D_n$ is $\Omega(\sqrt{N})$. Childs \cite{Childs2021Lecture} showed that a classical algorithm must make $\Omega(\sqrt{N})$ queries if there are $N$ candidate subgroups whose only common element is the identity element. Recently, Nayak \cite{Nayak2021} proposed the deterministic query algorithms for solving the hidden subgroup problem. 

It is worth noting that a lower bound on the classical query complexity is also a lower bound on the classical sample complexity,  but not necessarily a tight lower bound. For example, for GSP, its classical sample complexity will be shown to be $\Theta\left(\max\left\{k,\sqrt{k\cdot p^{n-k}}\right\}\right)$ in this paper, whereas the known best  lower bound on the classical query complexity was given as  $\Omega\left(\max\{k, \sqrt{p^{n-k}}\}\right)$  in \cite{Ye2019query} .
\\
\\
\noindent
\textbf{Motivation}.
The motivation for studying the classical sample complexity of the hidden subgroup problem is as follows: (i) Note that there exists a great understanding of the quantum sample complexity for the hidden subgroup problem. However, as far as we know, how well do classical sample algorithms perform on this problem still needs to be explored. (ii) For some instances of the hidden subgroup problem, such as $\mathsf{GSP}$, the classical query complexity is not tight \cite{Ye2019query}. Due to the difficulty in exploring query complexity, we hope to obtain a better lower bound on the classical sample complexity of  $\mathsf{GSP}$, since the sample model is weaker than the query model.
\subsection{Problem statement and our results}\label{sec:resut}
In this paper, we consider the classical sample complexity of the hidden subgroup problem ($\mathsf{HSP}$) over any finite group. The definition of $\mathsf{HSP}$ is as follows:
\begin{definition}[$\mathsf{HSP}$]
\
\begin{tcolorbox}
			\label{identification version}
			\noindent
			\textbf{Given:} A finite group $G$; a set $\mathcal{H}$ of candidate subgroups of $G$; an (unknown) function $f:G \to V$, where $V$ is a finite set.
			\\
			\\
			\textbf{Promise:} There exists a subgroup $H \in \mathcal{H}$ such that for any $x, y \in G, f(x) = f(xy)$ iff $y \in H$.
			\\
			\\
			\textbf{Problem:} Identify $H \in \mathcal{H}$.
		\end{tcolorbox}
\end{definition}
Unlike the general definition, we explicitly give the candidate subgroups set $\mathcal{H}$, which is a critical  component in the hidden subgroup problem. Actually, the hidden subgroup problem only depends on $G$ and $\mathcal{H}$ essentially. 

Moreover, we consider some interesting instances of $\mathsf{HSP}$ further by giving more concrete parameters. First, we discuss an instance of the hidden subgroup problem over a class of Abelian groups, call the restricted Abelian hidden subgroup problem ($\mathsf{rAHSP}$). The definition of $\mathsf{rAHSP}$ is as follows:
\begin{definition}[$\mathsf{rAHSP}$]\
\begin{tcolorbox}
			\noindent
			\textbf{Given:} An (unknown) function $f:G \to V$, where $G = \mathbb{Z}_{p_1}^{n_1} \times \mathbb{Z}_{p_2}^{n_2} \times \cdots \times \mathbb{Z}_{p_m}^{n_m}$ and $p_i$'s are primes; positive integers $k_1,...,k_m$ satisfying that $k_i < n_i$ for any $i \in [m]$.
			\\
			\\
			\textbf{Promise:} There exists a subgroup $H$ such that (i) $H = H_1 \times H_2 \times \cdots \times H_m$; (ii) $rank(H_i) = k_i$ for any $i \in [m]$; (iii) for any $x, y \in G, f(x) = f(x+y)$ iff $y \in H$.
			\\
			\\
			\textbf{Problem:} Identify $H$.
\end{tcolorbox}
\end{definition}
It is easy to see $\mathsf{rAHSP}$ is a subproblem of $\mathsf{HSP}$, since $\mathcal{H}$ is the set of all the subgroups satisfying the above promise. Note that any candidate subgroup in $\mathcal{H}$ has the same order in $\mathsf{rAHSP}$. 

Furthermore, we continue to consider a simplified version of $\mathsf{rAHSP}$, generalized Simon's problem  ($\mathsf{GSP}$),  defined as follows:
\begin{definition}[$\mathsf{GSP}$ \cite{Ye2019query}]\
\begin{tcolorbox}
			\label{generalized Simon's problem}
			\noindent
			\textbf{Given:} An (unknown) function $f:\mathbb{Z}_p^n \to V$, where $p$ is a prime, $V$ is a finite set; a positive integer $k <n$.
			\\
			\\
			\textbf{Promise:} There exists a subgroup $H \le \mathbb{Z}_p^n$ of rank $k$ such that for any $x, y \in \mathbb{Z}_p^n, f(x) = f(y)$ iff $y-x \in H$.
			\\
			\\
			\textbf{Problem:} Identify $H$.
		\end{tcolorbox}
\end{definition}
In $\mathsf{GSP}$, $G = \mathbb{Z}_p^n$, $\mathcal{H}$ is the set of all the subgroups of rank $k$ in $G$. Additionally, $\mathsf{GSP}$ is an extension version of Simon's problem, which is a well-known problem in the history of quantum computing. Simon's problem is a special case of $\mathsf{GSP}$ with $k = 1$ and $p = 2$, as shown in Definition \ref{Def:Simon}. Similarly, we can express Simon's problem as an instance of $\mathsf{HSP}$ by making $G = \mathbb{Z}_2^n$, $\mathcal{H} = \{\{0,s\}|s\in \{0,1\}^n/ \{0\}\}$.

\begin{definition}[Simon's problem]\
\label{Def:Simon}
\begin{tcolorbox}
			\noindent
			\textbf{Given:} An (unknown) function $f:\mathbb{Z}_2^n \to V$, where $V$ is a finite set.
 						\\
			\\
			\textbf{Promise:} There exists a non-zero string $s \in \{0,1\}^n$ such that for any $x, y \in \mathbb{Z}_2^n$, $f(x) = f(y)$ iff $x = y$ or $x + y = s$.
			\\
			\\
			\textbf{Problem:} Identify $s$.
\end{tcolorbox}
\end{definition}

The above problems have the following relationship:

\tikzstyle{class}=[]
\tikzstyle{arrow} = [->]
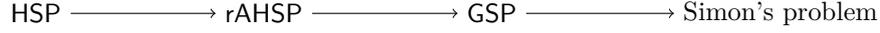
\begin{figure}[H]
\centering
\begin{tikzpicture}[node distance=0cm]
\node[class](rootnode){$\mathsf{HSP}$};
\node[class,below of=rootnode,yshift=0,xshift=3cm](first-1){$\mathsf{rAHSP}$};
\node[class,below of=first-1,yshift=0,xshift=3cm](two-1){$\mathsf{GSP}$};
\node[class,below of=two-1,yshift=0,xshift=3.8cm](three-1){Simon's problem};
\draw [arrow] (rootnode) -- node [font=\small] {} (first-1);
\draw [arrow] (first-1) -- node [font=\small] {} (two-1);
\draw [arrow] (two-1) -- node [font=\small] {} (three-1);
\end{tikzpicture}
\caption{The relations between the above problems. A rightward arrow from $A$ to $B$ means $B$ is a subproblem of $A$.}
\label{decision tree}
\end{figure}

For the above problems, given enough i.i.d. examples, a learning algorithm may give correct answers with bounded error $\delta$ ($0 \le \delta < 1/2$). In this case, we define the sample complexity of the problem as the number of examples needed by the optimal learning algorithm in the worst case. 
In the classical case, an example has the form $(x,f(x))$, where $x$ is distributed according to  a given  distribution over $G$. 
If $x$ is assumed to follow the uniform distribution, then $(x,f(x))$ is called a {\it uniform example}. In the quantum case,  a uniform quantum example is such a quantum state   $\frac{1}{\sqrt{|G|}}\sum_{x \in G} \ket{x}\ket{f(x)}$. In this paper, we focus on the classical case, and our problem is: what number of uniform examples is sufficient and necessary to learn the goal with bounded error in the above problems? 

We first obtain some characterizations for the sample complexity of $\mathsf{HSP}$. Let $sr(\mathcal{H}) = \max_{H \in \mathcal{H}}\min\{|S|:S\subseteq H, \langle S \rangle = H\}$. Our main result is as follows:
 \begin{theorem}
 \label{Theorem:main}
For $\mathsf{HSP}$, the number of uniform examples necessary to learn the hidden subgroup with bounded error is at least $\Omega\left(\max \left\{\min\limits_{H \in\mathcal{H}}\frac{\log |\mathcal{H}|}{\log \frac{|G|}{|H|}}, \min\limits_{H \in\mathcal{H}}\sqrt{\frac{|G|}{|H|}\frac{\log |\mathcal{H}|}{\log \frac{|G|}{|H|}}}\right\}\right)$. On the other hand, the number of uniform examples sufficient to learn the hidden subgroup with bounded error is at most $O\left(\max\limits_{H \in \mathcal{H}} \left\{sr(\mathcal{H}),\sqrt{\frac{|G|}{|H|}sr(\mathcal{H})}\right\}\right)$.
\end{theorem}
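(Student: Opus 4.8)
I would prove the two bounds independently. The lower bound is purely information-theoretic (Yao's principle together with Fano's inequality); the two terms inside the max come from two different ways of upper-bounding the entropy of the coset-partition that the samples induce. The upper bound comes from the obvious ``collect collisions and output the subgroup they generate'' sample algorithm, analysed via a birthday-type estimate plus a standard group-generation fact.

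\textbf{Lower bound.} First reduce to a clean setting: put $V=\{0,\dots,|G|-1\}$, draw the hidden subgroup $H$ uniformly from $\mathcal{H}$, and let the adversary relabel cosets by a uniformly random injection $\sigma\colon G/H\hookrightarrow V$, so $f(x)=\sigma(xH)$; any sample algorithm must still succeed with probability $\ge 1-\delta$ on this distribution. With $m$ uniform examples $x_1,\dots,x_m$, conditioned on $x_{1..m}$ and on the partition $\Pi$ of $[m]$ defined by $i\sim j \iff x_i^{-1}x_j\in H$, the label vector $(f(x_1),\dots,f(x_m))$ is just a uniform injective assignment of $V$-labels to the blocks of $\Pi$, a distribution not depending on $H$; hence $H\to(x_{1..m},\Pi)\to\text{transcript}$ is a Markov chain, and since the $x_i$ are independent of $H$ and $\Pi$ is a deterministic function of $(x_{1..m},H)$, we get $I(H;\text{transcript})\le I(H;x_{1..m},\Pi)=H(\Pi\mid x_{1..m})$. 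Fano then forces $H(\Pi\mid x_{1..m})\ge(1-\delta)\log|\mathcal{H}|-1=\Omega(\log|\mathcal{H}|)$. For the first term, $\Pi$ has at most $\max_{H\in\mathcal{H}}|G|/|H|$ blocks, so $H(\Pi\mid x_{1..m})\le m\log(\max_H|G|/|H|)$, giving $m=\Omega\!\left(\min_H\frac{\log|\mathcal{H}|}{\log(|G|/|H|)}\right)$. For the second term I would encode $\Pi$ (a disjoint union of cliques on $[m]$) by its edge count $C$ plus which edge set it is, so $H(\Pi\mid x_{1..m},C)\le C\log(em^2/C)$; since $t\mapsto t\log(em^2/t)$ is concave, two applications of Jensen give $H(\Pi\mid x_{1..m})\le O(\log m)+\mathbb{E}[C]\log(em^2/\mathbb{E}[C])$ with $\mathbb{E}[C]=\binom m2\mathbb{E}_H[|H|/|G|]\le\binom m2/\min_H(|G|/|H|)$. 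Combining with $H(\Pi\mid x_{1..m})=\Omega(\log|\mathcal{H}|)$, discarding the lower-order $\log m$ term, and using that $t\mapsto t/\log t$ is increasing (so $\min_H(|G|/|H|)$ reappears as a $\min$ over $H$ inside the square root), yields $m=\Omega\!\left(\min_H\sqrt{\frac{|G|}{|H|}\frac{\log|\mathcal{H}|}{\log(|G|/|H|)}}\right)$. Taking the max of the two completes this direction.

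\textbf{Upper bound.} Write $r=sr(\mathcal{H})$, $Q=\max_{H\in\mathcal{H}}|G|/|H|$, fix the constant $\delta$, and set $T=r+\Theta(\log(1/\delta))$ and $m=\Theta\!\left(\max\{T,\sqrt{QT}\}\right)$. The algorithm draws $m$ uniform examples, looks at which pairs share an $f$-value, forms $S=\{x_i^{-1}x_j:f(x_i)=f(x_j)\}$, and outputs $\langle S\rangle$ (and an arbitrary element of $\mathcal{H}$ if $\langle S\rangle\notin\mathcal{H}$). Since $\langle S\rangle\subseteq H$ always, it suffices to prove $\Pr[\langle S\rangle=H]\ge1-\delta$, and three facts do this: (i) if coset $c$ receives $m_c$ of the samples then, conditioned on the partition, the $m_c-1$ differences of those samples with the first of them are i.i.d.\ uniform in $H$, and these collections are independent across cosets, so conditioned on the partition the algorithm holds $m-b$ i.i.d.\ uniform elements of $H$, with $b$ the number of distinct cosets seen; (ii) $T=r+\Theta(\log(1/\delta))$ i.i.d.\ uniform elements of any $H\in\mathcal{H}$ generate $H$ with probability $\ge1-\delta/2$; (iii) a birthday-type estimate shows $\Pr[m-b\ge T]\ge1-\delta/2$ for our choice of $m$. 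Chaining these gives $\Pr[\langle S\rangle=H]\ge(1-\delta/2)^2\ge1-\delta$, and $m=O(\max\{r,\sqrt{Qr}\})=O\!\left(\max_H\{sr(\mathcal{H}),\sqrt{(|G|/|H|)\,sr(\mathcal{H})}\}\right)$.

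\textbf{Main obstacles.} I expect two places to require real care. In the lower bound it is the concave-entropy estimate for $\Pi$ in the sparse-collision regime: making the Jensen step rigorous and verifying that the $O(\log m)$ term (and degenerate parameter ranges, e.g.\ when $\max_H|G|/|H|$ is small) is genuinely negligible against $\log|\mathcal{H}|$. In the upper bound the hard part is fact (iii), which must hold uniformly over the unknown true $H$: this forces a case split according to whether $m$ is far below, comparable to, or above $|G|/|H|$ — in the dense case one uses a bounded-difference concentration inequality for $b$, while in the sparse case one bounds $m-b\ge N-N_3$ (same-coset pairs minus same-coset triples), controls $N$ through its variance (the relevant indicator variables turn out to be pairwise independent, so Chebyshev suffices) and $N_3$ through Markov's inequality. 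One should also state and cite precisely the classical fact underlying (ii), that $\mathsf{r}(H)+O(\log(1/\delta))$ uniformly random elements generate a finite group $H$ with probability $\ge 1-\delta$.
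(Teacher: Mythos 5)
Your overall architecture matches the paper's quite closely, so I will focus on where it diverges and on the one step I think is genuinely unsupported. For the lower bound the paper runs exactly your scheme: Fano's inequality gives $I(\mathscr{H}:B)\ge(1-\delta)\log|\mathcal{H}|-O(1)$, and the two terms of the max come from bounding the information carried by (a) the label sequence, via $I(f(X))\le\sum_i I(f(X_i))\le T\max_H\log\frac{|G|}{|H|}$, and (b) the collision pattern (your partition $\Pi$, the paper's $Y$), argued to be a sufficient statistic for $\mathscr{H}$. The only substantive difference is how the entropy of the collision pattern is bounded: the paper uses subadditivity over the $\binom{T}{2}$ pairwise collision indicators together with the binary-entropy estimate $I(p)\le 2p\log(1/p)$ for $p\le 1/2$, which is cleaner than your global edge-set encoding plus Jensen and avoids the $O(\log m)$ bookkeeping and degenerate-range checks you flag; conversely, your explicit random relabelling $\sigma$ of cosets is a nice touch that makes the sufficient-statistic step (and the step $I(f(X_i))\le\log\frac{|G|}{|H|}$, which otherwise quietly conditions on $H$) fully rigorous. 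For the upper bound the algorithmic idea is the same --- harvest collision differences and output the subgroup they generate --- but the paper sidesteps your ``hard part (iii)'' by structuring the samples: it draws one block $P$ of size $A\approx\sqrt{N\,\mathsf{sr}(\mathcal{H})}$ and then $9\,\mathsf{sr}(\mathcal{H})$ independent blocks $Q_i$ of size $B\approx\sqrt{N/\mathsf{sr}(\mathcal{H})}$ with $AB\ge 9N$, so that each $Q_i$ collides with $P$ with probability $>3/4$ by a single Chebyshev estimate and the number of harvested $H$-elements concentrates trivially; your one-shot version needs the pair/triple second-moment case analysis you describe. Both designs yield the same sample count.

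The genuine gap is your fact (ii). The statement ``$\mathsf{r}(H)+O(\log(1/\delta))$ i.i.d.\ uniform elements generate a finite group $H$ with probability $\ge 1-\delta$'' is not a citable classical fact for arbitrary finite groups: the general results in this direction (Pak; Lubotzky, \emph{J.\ Algebra} 257 (2002); Detomi--Lucchini) bound the number of random elements needed by $\mathsf{r}(H)+O(\log\log|H|)$, and the additive $\log\log|H|$-type term is not known to be removable in general. If it is needed, your choice $T=r+\Theta(\log(1/\delta))$ is too small and the claimed bound $O(\max\{\mathsf{sr}(\mathcal{H}),\sqrt{Q\,\mathsf{sr}(\mathcal{H})}\})$ does not follow when $\log\log|H|\gg\mathsf{r}(H)$. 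You should either prove the generation estimate you need or restrict to classes where it is elementary --- e.g.\ for $H\le\mathbb{Z}_p^n$ of rank $k$ (which covers $\mathsf{rAHSP}$ and $\mathsf{GSP}$) the probability that $k$ uniform elements generate is $\prod_{j=1}^{k}(1-p^{-j})>1/4$, and constant-factor repetition finishes the job. I note that the paper's own treatment of this step (asserting that $\mathsf{r}(H)$ uniform elements generate any finite $H$ with probability $>1/4$, via a chain argument in which each strict inclusion at least doubles the subgroup) is airtight only in the same elementary-abelian-type setting, so this is a shared weak point of the fully general statement rather than a defect unique to your proposal; but as written you are leaning on a citation that does not exist in the form required.
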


We also analyze the sample complexity of  $\mathsf{rAHSP}$ and $\mathsf{GSP}$. By Theorem \ref{Theorem:main}, we obtain the following corollaries:
\begin{corollary}
\label{corollary:ahsp}
For $\mathsf{rAHSP}$, the number of uniform examples necessary to learn the hidden subgroup with bounded error is at least  $\Omega\left(\max\left\{\min\limits_{i\in [m]} k_i, \min\limits_{i\in [m]} \sqrt{k_i \prod_{j=1}^m p_j^{n_j-k_j}}\right\}\right)$. Moreover, the number of uniform  examples sufficient to learn the hidden subgroup with bounded error is at most $O\left(\max\limits_{i\in [m]}\left\{k_i, \sqrt{k_i \prod_{j=1}^m p_j^{n_j-k_j}}\right\} \right)$.
\end{corollary}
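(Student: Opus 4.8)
\textbf{Proof plan for Corollary~\ref{corollary:ahsp}.}
The plan is to obtain the corollary as a direct specialization of Theorem~\ref{Theorem:main}, so the work amounts to evaluating the four quantities $|G|$, $|H|$, $|\mathcal{H}|$ and $sr(\mathcal{H})$ for $\mathsf{rAHSP}$. Here $G=\mathbb{Z}_{p_1}^{n_1}\times\cdots\times\mathbb{Z}_{p_m}^{n_m}$ and $\mathcal{H}$ consists of all subgroups $H=H_1\times\cdots\times H_m$ with $H_i\le\mathbb{Z}_{p_i}^{n_i}$ of rank $k_i$. Every such $H$ has the same order $|H|=\prod_{i}p_i^{k_i}$, hence $|G|/|H|=\prod_{i}p_i^{n_i-k_i}$ is independent of $H$, and the outer $\min$/$\max$ over $\mathcal{H}$ appearing in Theorem~\ref{Theorem:main} collapse.

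Next I would count $|\mathcal{H}|$: a rank-$k_i$ subgroup of $\mathbb{Z}_{p_i}^{n_i}$ is precisely a $k_i$-dimensional subspace of $\mathbb{F}_{p_i}^{n_i}$, so $|\mathcal{H}|=\prod_{i=1}^{m}\binom{n_i}{k_i}_{p_i}$, a product of Gaussian binomial coefficients. Using the elementary two-sided estimate $p^{k(n-k)}\le\binom{n}{k}_{p}<C\,p^{k(n-k)}$ for an absolute constant $C$ (valid whenever $0<k<n$), one gets $\log|\mathcal{H}|=\Theta\big(\sum_{i}k_i(n_i-k_i)\log p_i\big)$, whereas $\log\frac{|G|}{|H|}=\sum_{i}(n_i-k_i)\log p_i$. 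Thus $\frac{\log|\mathcal{H}|}{\log(|G|/|H|)}$ equals, up to a constant factor, the average of $k_1,\dots,k_m$ weighted by $(n_i-k_i)\log p_i$, which lies between $\min_i k_i$ and $\max_i k_i$. Substituting the lower end $\min_i k_i$ into the two lower-bound terms of Theorem~\ref{Theorem:main}, and using that $\prod_{j}p_j^{n_j-k_j}$ does not depend on $i$ (so that $\min_i\sqrt{k_i\prod_j p_j^{n_j-k_j}}=\sqrt{(\min_i k_i)\prod_j p_j^{n_j-k_j}}$), yields the claimed lower bound $\Omega\big(\max\{\min_i k_i,\ \min_i\sqrt{k_i\prod_j p_j^{n_j-k_j}}\}\big)$.

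For the upper bound I would evaluate $sr(\mathcal{H})$, the least number of generators of $H=H_1\times\cdots\times H_m$. Each $H_i$ is elementary abelian of rank $k_i$ over $p_i$; grouping the direct factors by distinct primes, the $p$-part of $H$ has rank $\sum_{i:\,p_i=p}k_i$, so $sr(\mathcal{H})=\max_p\sum_{i:\,p_i=p}k_i$, which is $\max_i k_i$ when the $p_i$ are pairwise distinct (and otherwise $\max_i k_i$ is replaced by $\max_p\sum_{i:\,p_i=p}k_i$). Plugging $sr(\mathcal{H})=\max_i k_i$ and $|G|/|H|=\prod_j p_j^{n_j-k_j}$ into the upper bound of Theorem~\ref{Theorem:main} gives $O\big(\max\{\max_i k_i,\ \sqrt{(\max_i k_i)\prod_j p_j^{n_j-k_j}}\}\big)=O\big(\max_i\{k_i,\ \sqrt{k_i\prod_j p_j^{n_j-k_j}}\}\big)$, as claimed. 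The one step requiring care — everything else being a routine substitution — is the Gaussian binomial estimate together with the ensuing observation that $\log|\mathcal{H}|/\log(|G|/|H|)$ is a weighted average of the $k_i$; in particular the corollary retains only $\min_i k_i$, which is a legitimate (if slightly lossy) weakening of that average.
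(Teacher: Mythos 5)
Your proposal is correct and follows essentially the same route as the paper: compute $|G|/|H|=\prod_j p_j^{n_j-k_j}$ (the same for every $H\in\mathcal{H}$, so the outer $\min$/$\max$ collapse), lower-bound $\log|\mathcal{H}|$ via the count of rank-$k_i$ subgroups of $\mathbb{Z}_{p_i}^{n_i}$ to see that $\log|\mathcal{H}|/\log(|G|/|H|)$ is a weighted average of the $k_i$ (hence at least $\min_i k_i$), bound $\mathsf{sr}(\mathcal{H})$, and substitute into Theorem~\ref{Theorem:main}. One point in your favour: your evaluation $\mathsf{sr}(\mathcal{H})=\max_p\sum_{i:\,p_i=p}k_i$, with the explicit caveat that this equals $\max_i k_i$ only when the $p_i$ are pairwise distinct, is more careful than the paper's Claim~\ref{Claim:count}, whose asserted bound $\mathsf{r}(G_1\times\cdots\times G_m)\le\max_i\mathsf{r}(G_i)$ is false for non-coprime factors (e.g.\ $\mathsf{r}(\mathbb{Z}_2\times\mathbb{Z}_2)=2$) and whose proof tacitly assumes that a subgroup projecting onto every factor is the whole product; the stated form of the corollary therefore implicitly requires the $p_i$ to be distinct, exactly as you observe.
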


\begin{corollary}
\label{corollary:gsp}
The sample complexity of $\mathsf{GSP}$ is $\Theta\left(\max\left\{k,\sqrt{k\cdot p^{n-k}}\right\}\right)$.
\end{corollary}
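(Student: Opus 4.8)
The plan is to obtain Corollary~\ref{corollary:gsp} by specializing the generic bounds of Theorem~\ref{Theorem:main} to the parameters of $\mathsf{GSP}$; this is also the $m=1$ case of Corollary~\ref{corollary:ahsp} (take $p_1=p$, $n_1=n$, $k_1=k$), so once that corollary is established the statement is immediate, but I would spell out the computation directly from Theorem~\ref{Theorem:main} to keep the argument self-contained. First I would record the relevant quantities for $\mathsf{GSP}$: here $G=\mathbb{Z}_p^{\,n}$ and $\mathcal{H}$ consists of all rank-$k$ subgroups of $G$, so $|G|=p^n$, every $H\in\mathcal{H}$ has $|H|=p^k$ and hence $|G|/|H|=p^{\,n-k}$, and every such $H$ is isomorphic to $\mathbb{Z}_p^{\,k}$, so its minimum generating set has exactly $k$ elements and $sr(\mathcal{H})=k$. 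Substituting these into the upper bound of Theorem~\ref{Theorem:main} gives $O\!\left(\max\{k,\sqrt{k\,p^{\,n-k}}\}\right)$ uniform examples right away, the outer maximum over $H$ being trivial since all candidate subgroups have the same order.

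For the matching lower bound I need to show $\min_{H\in\mathcal{H}}\frac{\log|\mathcal{H}|}{\log(|G|/|H|)}=\frac{\log|\mathcal{H}|}{(n-k)\log p}=\Theta(k)$, which reduces to estimating $|\mathcal{H}|$, the number of $k$-dimensional subspaces of $\mathbb{F}_p^{\,n}$, i.e. the Gaussian binomial coefficient $\binom{n}{k}_p=\prod_{i=0}^{k-1}\frac{p^{\,n-i}-1}{p^{\,k-i}-1}$. Bounding each of the $k$ factors by $p^{\,n-k}\le\frac{p^{\,n-i}-1}{p^{\,k-i}-1}\le 2\,p^{\,n-k}$ (the left inequality is immediate; the right one follows from $p^{\,k-i}-1\ge p^{\,k-i}/2$, valid since $p\ge 2$ and $k-i\ge 1$) yields $p^{\,k(n-k)}\le|\mathcal{H}|\le 2^{k}p^{\,k(n-k)}$, hence $\log|\mathcal{H}|=k(n-k)\log p+O(k)$. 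Since $k<n$ forces $(n-k)\log p\ge\log 2$, dividing by $(n-k)\log p$ leaves a quantity lying between $k$ and $2k$, so the ratio is $\Theta(k)$; plugging this into the lower bound of Theorem~\ref{Theorem:main} gives $\Omega\!\left(\max\{k,\sqrt{k\,p^{\,n-k}}\}\right)$, and combining with the upper bound yields the claimed $\Theta\!\left(\max\{k,\sqrt{k\,p^{\,n-k}}\}\right)$.

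The argument is essentially bookkeeping once Theorem~\ref{Theorem:main} is available; the one step that needs a little care is the two-sided estimate of $\binom{n}{k}_p$, and in particular arranging that $\log|\mathcal{H}|/\log(|G|/|H|)$ comes out as $\Theta(k)$ and not merely, say, $O(kn)$ on the upper side — it is exactly this tightness that makes the lower and upper bounds of Theorem~\ref{Theorem:main} coincide and thereby determines the sample complexity of $\mathsf{GSP}$ up to a constant factor. I do not expect any other obstacle.
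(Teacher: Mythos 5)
Your proposal is correct and follows essentially the same route as the paper, which obtains the corollary by substituting $p_1=p$, $n_1=n$, $k_1=k$ into Corollary~\ref{corollary:ahsp}, itself proved by exactly your computation: counting rank-$k$ subgroups via $\prod_{j=0}^{k-1}\frac{p^{n}-p^{j}}{p^{k}-p^{j}} > p^{(n-k)k}$ and noting $sr(\mathcal{H})=k$. One small remark: your upper estimate $|\mathcal{H}|\le 2^{k}p^{k(n-k)}$ is not actually needed, since the matching upper bound comes from the algorithmic bound of Theorem~\ref{Theorem:main} via $sr(\mathcal{H})=k$ rather than from showing the ratio $\log|\mathcal{H}|/\log(|G|/|H|)$ is $O(k)$; only the lower estimate $|\mathcal{H}|>p^{k(n-k)}$ is used, exactly as in the paper.
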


Additionally, the sample complexity of Simon problem is a well-known result as Claim \ref{Claim:sp} (e.g. \cite{simon1994power,Wolf2019quantum}). Corollary \ref{corollary:gsp} matches with Claim \ref{Claim:sp} when $k = 1$ and $p=2$.
\begin{claim}
\label{Claim:sp}
The sample complexity of Simon's problem is $\Theta(\sqrt{2^n})$.
\end{claim}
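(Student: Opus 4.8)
The plan is to obtain the claim as the special case $k=1$, $p=2$ of Corollary~\ref{corollary:gsp}: plugging these values into $\max\{k,\sqrt{k\cdot p^{n-k}}\}$ gives $\max\{1,\sqrt{2^{n-1}}\}=\Theta(\sqrt{2^n})$, so both bounds of that corollary collapse to $\Theta(\sqrt{2^n})$. One can equivalently read this off Theorem~\ref{Theorem:main} directly, since for Simon's problem $G=\mathbb{Z}_2^n$, every candidate subgroup $H=\{0,s\}$ has rank $1$ (hence $sr(\mathcal{H})=1$), $|G|/|H|=2^{n-1}$, and $|\mathcal{H}|=2^n-1$; the upper bound then reads $O(\sqrt{2^{n-1}})$, and the lower bound reads $\Omega(\sqrt{2^{n-1}\cdot\Theta(1)})$. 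For completeness I would also record the classical self-contained argument, which is cleaner in this instance and is sketched below.

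\emph{Upper bound.} The promise forces $f$ to be exactly $2$-to-$1$, and a uniformly random $x\in\mathbb{Z}_2^n$ lies in a uniformly random one of the $2^{n-1}$ fibres $\{x,x\oplus s\}$. Take $m=C\sqrt{2^n}$ i.i.d.\ uniform examples and output $x_i\oplus x_j$ for any pair with $x_i\neq x_j$ and $f(x_i)=f(x_j)$ (such a pair necessarily satisfies $x_i\oplus x_j=s$), failing if no such pair exists. Let $N$ count the pairs $i<j$ with $x_i\oplus x_j=s$; since $x_i\oplus x_j$ is uniform on $\mathbb{Z}_2^n$ we have $\mathbb{E}[N]=\binom{m}{2}2^{-n}=\Theta(C^2)$, and a direct computation of $\mathbb{E}[N^2]$ (collision events for index-disjoint pairs are independent, and index-sharing contributes only a lower-order term) gives $\mathrm{Var}(N)=O(C^2)$. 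By Chebyshev, $\Pr[N=0]\le \mathrm{Var}(N)/\mathbb{E}[N]^2=O(1/C^2)$, which is at most $\delta$ once $C$ is a large enough constant; hence $O(\sqrt{2^n})$ uniform examples suffice.

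\emph{Lower bound.} By a standard averaging argument (Yao's principle) it suffices to fool algorithms on a fixed input distribution: let $s$ be uniform over $\mathbb{Z}_2^n\setminus\{0\}$ and let the sample points $x_1,\dots,x_m$ be i.i.d.\ uniform and independent of $s$. Given the $x_i$'s, the labels $f(x_1),\dots,f(x_m)$ carry no information beyond the partition induced by ``$x_i\oplus x_j=s$''; if no pair of sample points collides, then all labels are distinct regardless of $s$, so conditioned on this event the posterior of $s$ is uniform over $\mathbb{Z}_2^n\setminus(\{0\}\cup\{x_i\oplus x_j:i<j\})$, a set of size at least $2^n-1-\binom{m}{2}$. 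For $m=o(\sqrt{2^n})$ the probability of any collision is at most $\binom{m}{2}2^{-n}=o(1)$, so the success probability of any algorithm is bounded by $o(1)+\frac{1}{2^n-1-o(2^n)}=o(1)$, which is below any fixed bounded-error threshold for large $n$; hence $\Omega(\sqrt{2^n})$ uniform examples are necessary.

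The only subtle point is the second-moment estimate in the upper bound: with $\Theta(\sqrt{2^n})$ samples the expected number of useful collisions is merely a constant, so no Chernoff-type concentration is available, and one must instead control $\mathrm{Var}(N)$ in order to drive the failure probability below an arbitrary fixed $\delta<1/2$ at the cost of only a constant-factor increase in the sample size. The lower bound is essentially a one-line birthday estimate together with the observation that an ``all labels distinct'' transcript leaves $s$ almost uniformly distributed, and poses no real difficulty.
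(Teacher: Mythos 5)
Your proposal is correct and matches the paper's treatment: the paper does not reprove Claim~\ref{Claim:sp} but cites \cite{simon1994power,Wolf2019quantum} and observes, exactly as you do, that it is the $k=1$, $p=2$ specialization of Corollary~\ref{corollary:gsp} (equivalently of Theorem~\ref{Theorem:main} with $sr(\mathcal{H})=1$, $|G|/|H|=2^{n-1}$, $|\mathcal{H}|=2^n-1$). Your supplementary self-contained birthday/second-moment upper bound and Yao-style lower bound are the standard arguments behind the cited result and are sound.
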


We list  lower bounds on the sample complexity of Simon's problem, $\mathsf{GSP}$, $\mathsf{HSP}$ in Table \ref{table:lower bound}. The methods in this paper may be helpful to other problems.

\begin{table}[H]
			\caption{Known results about  the sample complexity of Simon's problem, $\mathsf{GSP}$ and $\mathsf{HSP}$. The results of $\mathsf{GSP}$ and $\mathsf{HSP}$ are obtained in this paper.}
			\label{table:lower bound}
			\centering
			\scalebox{0.66}{
			\begin{tabular}{|c|c|c|c|c|}
				\hline
				\ & Simon's problem & $\mathsf{GSP}$ & $\mathsf{HSP}$ \\
				\hline
				classical & $\Theta(\sqrt{2^n})$\cite{simon1994power,Wolf2019quantum} & \textcolor{orange}{$\Theta\left(\max\{k,\sqrt{k\cdot p^{n-k}}\}\right)$}
				 &  \textcolor{orange}{$O\left(\max\limits_{H \in \mathcal{H}} \left\{sr(\mathcal{H}),\sqrt{\frac{|G|}{|H|}sr(\mathcal{H})}\right\}\right)$, $\Omega\left(\max \left\{\min\limits_{H \in\mathcal{H}}\frac{\log |\mathcal{H}|}{\log \frac{|G|}{|H|}}, \min\limits_{H \in\mathcal{H}}\sqrt{\frac{|G|}{|H|}\frac{\log |\mathcal{H}|}{\log \frac{|G|}{|H|}}}\right\}\right)$} \\
				\hline
				quantum & $\Theta(n)$\tablefootnote{First, Simon algorithm \cite{simon1994power} is a sample algorithm with $O(n)$ uniform examples. Second, Koiran et al. \cite{KoiranNP05} presented the lower bound on the quantum query complexity of Simon's problem is $\Omega(n)$, so this is also a lower bound on the quantum sample complexity of Simon's problem.} & $\Theta(n-k)$\tablefootnote{$\mathsf{GSP}$ can be solved with a generalized Simon's alogrithm with $O(n-k)$ uniform examples \cite{Hirvensalo2001quantum}. Additionally, by substiting $|\mathcal{H}| = \prod_{j=0}^{k-1} \frac{p^{n}-p^j}{p^{k}-p^j}$ and $|H| = p^k$ ($\forall H \in \mathcal{H}$) \cite{Ye2019query} into $\Omega\left(\frac{\log |\mathcal{H}|}{\log \max_{H \in \mathcal{H}}|H|}\right)$ \cite{HayashiKK08}, we can see the lower bound on the quantum sample complexity of $\mathsf{GSP}$ is $\Omega(n-k)$.} & $O\left(\frac{\log |\mathcal{H}|}{\log \min\limits_{H \neq H' \in \mathcal{H}}(|H|/|H \cap H'|)}\right)$, $\Omega\left(\frac{\log |\mathcal{H}|}{\log \max\limits_{H \in \mathcal{H}}|H|}\right)$\cite{HayashiKK08}\\
				\hline
			\end{tabular}
			}
		\end{table}

 \subsection{Organization}
	The remainder of the paper is organized as follows. In Section \ref{sec:Pre}, we review some notations used in this paper.
	In Section \ref{sec:Bounds}, we present the lower and upper bounds of sample complexity of $\mathsf{HSP}$. 
	In Section \ref{sec:Application}, we apply the result in Section \ref{sec:Bounds} to obtain the sample complexity of some special problems, including $\mathsf{rAHSP}$ and $\mathsf{GSP}$. Finally, a conclusion is made in Section \ref{sec:Conclusion}.

\section{Preliminary}
\label{sec:Pre}
In this section, we present some notations used in this paper. Let $[m] = \{1,2,...,m\}$
and $\mathbb{Z}_p$ denote the additive group of elements $\{0,1,...,p-1\}$ with addition modulo $p$ denoted by $+$. For two groups $G_1,G_2$, let $G_1 \times G_2$ denote the direct product of $G_1$ and $G_2$. For a finite group $G$, a subset $S$ is said to be a generating set for $G$ if all elements in $G$ can be expressed as the finite product of elements in $S$ and their inverses, i.e., $G = \langle S \rangle = \{a_1^{l_1}a_2^{l_2}\cdots a_k^{l_k}|a_i \in S, l_i = \pm 1, k\in \mathbb{N}\}$. 
The \textit{rank} of $G$ is the cardinality of a minimal generating set of $G$, denoted by $\mathsf{r}(G) = \min\{|S|:S\subseteq G, \langle S \rangle = G\}$.
If $H$ is a subgroup of $G$, then $H \le G$; if $H$ is a proper subgroup of $G$, then $H < G$.
 Note that if $H$ is a subgroup of $\mathbb{Z}_p^n$, then $\mathsf{r}(H) = k$ if and only if $|H| = p^k$.
For a set $\mathcal{H} $ consisting of subgroups of $G$, the \textit{subgroup rank} of $\mathcal{H}$ is defined as $\mathsf{sr}(\mathcal{H}) = \max_{H \in \mathcal{H}}\mathsf{r}(H)$. The group with only one element, the identity element, is called the \textit{trivial group}.
 
Suppose $X,Y,Z$ are discrete random variables. 
If $X \sim p(x)$, then the  \textit{Shannon entropy} associated with $X$ is defined as $I(X) = -\sum_{x}p(x) \log p(x)$. 
If $(X,Y) \sim p(x,y)$, then the \textit{joint entropy} of $X$ and $Y$ is defined as $I(X,Y) = -\sum_{x,y}p(x,y)\log p(x,y)$; the \textit{entropy} of $X$ conditional on knowing $Y$ is defined as $I(X|Y) = I(X,Y)-I(Y)$; the \textit{mutual information} of $X$ and $Y$ is defined as $I(X:Y) = I(X)+I(Y)-I(X,Y)$; the \textit{conditional mutual information} of $X$ and $Y$ conditional on knowing $Z$ is defined by $I(X:Y|Z) = I(X|Z) - I(X|Y,Z)$. The \textit{binary entropy} of a bit with distribution $(p,1-p)$ is defined as $I(p) = -p \log p-(1-p)\log (1-p)$. Some basic properties of Shannon entropy \cite{CT2001} are useful in this paper:

\begin{itemize}
\item $I(X_1,X_2,...,X_n) \le \sum_{i=1}^n I(X_i)$ with equality if and only if $X_i$'s are independent random variables.
\item $I(X:Y) \ge 0$ and $I(X|Y) \le I(X)$ with equality if and only if $X$ and $Y$ are independent. 
\item $I(X:Y) \le I(Y)$ with equality if and only if $Y$ is a function of $X$.
\item $I(X) \le \log |\mathcal{X}|$ with equality if and only if $X$ is a uniform random variables over $\mathcal{X}$.

\item If $I(X:Z|Y) = 0$, then $I(X:Y) \ge I(X:Z)$. 
\end{itemize}

For $\mathsf{HSP}$, let $\mathscr{H}$ be the random variable of the hidden subgroup that is uniformly distributed over $\mathcal{H}$. 
For a sample algorithm of $\mathsf{HSP}$, suppose the number of i.i.d. uniform examples is $T$. Let $B_i = (X_i,f(X_i))$ be the random variable of the $i$-th uniform examples for $i \in [T]$. Let $B = B_1\cdots B_T$, $X = X_{1}\cdots X_{T}$ and $f(X)=f(X_1)\cdots f(X_T)$. Suppose the range of $B$ is $\mathcal{B}$. For $i,j \in [T]$, we define random variable $Y_{X_i,X_j}$ as follows: if $f(X_i) = f(X_j)$, let $Y_{X_i,X_j} = 1$; if $f(X_i) \neq f(X_j)$, let $Y_{X_i,X_j} = 0$. Let $Y$ be the sequence of $Y_{X_i,X_j}$ for any $i < j$, i.e., $Y = Y_{X_1,X_2}Y_{X_1,X_3}\cdots Y_{X_{T-1},X_T}$. 
Since $Y$ is a function of $B$, without loss of generality, we assume $g(B) = Y$.

\section{General Bounds for $\mathsf{HSP}$}
\label{sec:Bounds}
In this section, we present the general bounds for the sample complexity of $\mathsf{HSP}$ in Theorem \ref{Theorem:main}. 
We show  lower and upper bounds in Section \ref{sec:lower} and \ref{sec:upper},  respectively.

\subsection{Lower bound}
\label{sec:lower}
In this section, we present a lower bound by proving Theorem \ref{Theorem:low}. We use an information-theoretic method.


\begin{theorem}[Lower bound]
\label{Theorem:low}
The number of uniform examples necessary to solve $\mathsf{HSP}$ with bounded error is at least $\Omega\left(\max\left\{\min\limits_{H \in \mathcal{H}}\frac{\log |\mathcal{H}|}{\log \frac{|G|}{|H|}},\min\limits_{H \in \mathcal{H}}\sqrt{\frac{|G|}{|H|}\frac{\log |\mathcal{H}|}{\log \frac{|G|}{|H|}}}\right\}\right)$.
\end{theorem}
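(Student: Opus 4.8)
The plan is an information-theoretic (Fano-type) argument that compares how much information about the hidden subgroup a correct learner must acquire with how much information $T$ uniform examples can carry. Since sample complexity is worst-case, it suffices to exhibit one hard input distribution: take $\mathscr{H}$ uniform over $\mathcal{H}$ and, given $\mathscr{H}=H$, let $f$ be a uniformly random injection of the cosets $G/H$ into $V$ (say $|V|=|G|$). The key point is that such an $f$ ``obscures'' its codomain: conditioned on the sample points $X$ and on the collision pattern $Y=g(B)$ (the bit sequence $Y_{X_i,X_j}$, equivalently the partition of $[T]$ with $i\sim j\iff f(X_i)=f(X_j)$), the raw labels $f(X)$ are just uniformly random distinct tokens and are therefore independent of $\mathscr{H}$. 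Combining this with $I(\mathscr{H}:X)=0$ (the choice of $X$ does not depend on the hidden subgroup) and the chain rule for mutual information gives $I(\mathscr{H}:B)=I(\mathscr{H}:f(X)\mid X)=I(\mathscr{H}:Y\mid X)\le I(Y\mid X)$, so it remains to lower bound $I(\mathscr{H}:B)$ by Fano and to upper bound $I(Y\mid X)$ in two ways.

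For the Fano step: if a $T$-sample algorithm identifies $\mathscr{H}$ with error probability at most $\delta<1/2$, then $I(\mathscr{H}\mid B)\le 1+\delta\log|\mathcal{H}|$, hence $I(\mathscr{H}:B)=I(\mathscr{H})-I(\mathscr{H}\mid B)\ge(1-\delta)\log|\mathcal{H}|-1=\Omega(\log|\mathcal{H}|)$ (we may assume $|\mathcal{H}|\ge 2$, else the bound is vacuous). For the $\log$-term upper bound on $I(Y\mid X)$: for fixed $X$ and fixed $\mathscr{H}=H$ the pattern $Y$ is determined and is a partition of $[T]$ with at most $|G|/|H|$ blocks, so for fixed $X$ the variable $Y$ takes at most $(|G|/|H_{\min}|)^{T}$ values, where $H_{\min}$ is a smallest subgroup in $\mathcal{H}$; thus $I(Y\mid X)\le T\log(|G|/|H_{\min}|)=T\cdot\max_{H\in\mathcal{H}}\log\frac{|G|}{|H|}$. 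Equating with $\Omega(\log|\mathcal{H}|)$ yields $T=\Omega\big(\log|\mathcal{H}|/\max_{H}\log\tfrac{|G|}{|H|}\big)=\Omega\big(\min_{H}\tfrac{\log|\mathcal{H}|}{\log(|G|/|H|)}\big)$.

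For the square-root term I use subadditivity of Shannon entropy: $I(Y\mid X)\le I(Y)\le\sum_{i<j}I(Y_{X_i,X_j})$. Each $Y_{X_i,X_j}$ is a bit equal to $1$ exactly when $X_i^{-1}X_j\in\mathscr{H}$, which occurs with probability $\bar p=\mathbb{E}[|\mathscr{H}|]/|G|\le|H_{\max}|/|G|$ (two independent uniform elements of $G$ lie in the same coset of a fixed subgroup $H$ with probability $|H|/|G|$); since the binary entropy $I(\cdot)$ is increasing on $[0,1/2]$ and $I(q)=O\!\big(q\log(1/q)\big)$ for $q\le 1/2$, we get $I(Y)\le\binom{T}{2}I(\bar p)=O\!\big(T^{2}\cdot\tfrac{|H_{\max}|}{|G|}\log\tfrac{|G|}{|H_{\max}|}\big)$. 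Comparing with $\Omega(\log|\mathcal{H}|)$ and solving for $T$ gives $T=\Omega\!\Big(\sqrt{\tfrac{|G|}{|H_{\max}|}\cdot\tfrac{\log|\mathcal{H}|}{\log(|G|/|H_{\max}|)}}\Big)\ge\Omega\!\Big(\min_{H}\sqrt{\tfrac{|G|}{|H|}\tfrac{\log|\mathcal{H}|}{\log(|G|/|H|)}}\Big)$. Taking the maximum of the two derived lower bounds proves the theorem; throughout one may assume every $H\in\mathcal{H}$ is proper, the case $G\in\mathcal{H}$ only making the relevant terms vacuous.

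The step I expect to be the main obstacle is the reduction $I(\mathscr{H}:B)=I(\mathscr{H}:Y\mid X)$: one has to argue carefully that a worst-case $f$ makes the labels $f(X)$ informationally useless beyond the collision structure they induce (the random-relabeling argument above, which amounts to checking $\mathscr{H}\perp f(X)\mid(X,Y)$), and that restricting attention to this $f$ is legitimate for a worst-case lower bound. Once $B$ is replaced by $(X,Y)$, the remainder is routine: Fano's inequality, subadditivity of entropy, a crude count of partitions of $[T]$, and the elementary estimate $I(q)=O(q\log(1/q))$, together with minor bookkeeping (the monotonicity of $u\mapsto u/\log u$, or simply $\min_{H}\phi(H)\le\phi(H_{\max})$ with $\phi(H)=\tfrac{|G|}{|H|}\tfrac{\log|\mathcal{H}|}{\log(|G|/|H|)}$) to put the bounds into the stated $\min_{H\in\mathcal{H}}$ form.
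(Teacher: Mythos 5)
Your proposal is correct and follows essentially the same information-theoretic route as the paper: Fano's inequality lower-bounds $I(\mathscr{H}:B)$ by $\Omega(\log|\mathcal{H}|)$, and the two terms of the bound come from, respectively, a per-example entropy bound of $\max_{H}\log\frac{|G|}{|H|}$ and a per-pair collision-indicator bound of $O\bigl(\frac{|H|}{|G|}\log\frac{|G|}{|H|}\bigr)$ summed over $\binom{T}{2}$ pairs. Your explicit random-injection construction making the labels independent of $\mathscr{H}$ given $(X,Y)$ is in fact a more careful justification of the step the paper dispatches with its ``sufficient statistic'' claim, and your partition-counting bound on $I(Y\mid X)$ is an equivalent substitute for the paper's subadditivity bound on $I(f(X))$.
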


\begin{proof}


First, we prove the left part of the lower bound using the following three-step analysis: 
\begin{enumerate}
\item $I(f(X)) \ge (1-\delta)\log |\mathcal{H}|-H(\delta)$. 

\textit{Proof of item 1}. 
Since
\begin{equation*}
\begin{aligned}
I(\mathscr{H}:B) &\le I(\mathscr{H}:X)+I(\mathscr{H}:f(X)) \\
&= 0+I(\mathscr{H}:f(X)) \\
&\le I(f(X)),\\
\end{aligned}
\end{equation*}
we have $I(f(X)) \ge I(\mathscr{H}:B) > (1-\delta)\log |\mathcal{H}|-I(\delta)$ by Lemma \ref{Lemma:HSB}.

\item $I(f(X)) \le \sum_{i} I(f(X_i))$.


\item $I(f(X_i)) \le \max\limits_{H\in \mathcal{H}}\log \frac{|G|}{|H|}$ for any $i$.

\textit{Proof of item 3.}
The function value of $f$ is constant on cosets of $H$ and distinct among different cosets of $H$, thus $f(X_i)$ is a uniform variable over $\frac{|G|}{|H|}$ different values. In the worst case, $H$ is the largest subgroup of $G$. Thus, $I(f(X_i)) \le \max\limits_{H\in \mathcal{H}}\log \frac{|G|}{|H|}$.  
\end{enumerate}
Combining these three steps implies 
\begin{equation*}
\begin{aligned}
T &> \frac{(1-\delta)\log |\mathcal{H}|-I(\delta)}{\max\limits_{H\in \mathcal{H}}\log \frac{|G|}{|H|}} \\
&= \min\limits_{H\in \mathcal{H}}\frac{(1-\delta)\log |\mathcal{H}|-I(\delta)}{\log \frac{|G|}{|H|}} \\
&= \Omega\left(\min\limits_{H\in \mathcal{H}} \frac{\log |\mathcal{H}|}{\log \frac{|G|}{|H|}}\right). \\
\end{aligned}
\end{equation*}

Second, we continue to prove the right part using a similar method.
\begin{enumerate}
\item $I(Y) \ge (1-\delta)k \cdot \log |\mathcal{H}|-H(\delta)$. 

\textit{Proof of item 1.}
By the definition of $B$ and $Y$, we have 
\begin{equation*}
Pr\{B=b|Y=y\} = 
\begin{cases}
\frac{1}{|\{b\in \mathcal{B}:g(b) = y\}|},& \text{if } g(b) = y \\
0,& \text{if } g(b) \neq y
\end{cases}.
\end{equation*}
Thus, given $Y$, $B$ is independent of $\mathscr{H}$, i.e, $Y$ is a sufficient statistic of $\mathscr{H}$, which means $I(\mathscr{H}:B|Y) = 0$. As a result, $I(\mathscr{H}:Y) \ge I(\mathscr{H}:B)$. Hence, 
$I(Y) \ge I(\mathscr{H}:Y) \ge I(\mathscr{H}:B) > (1-\delta)k \cdot \log |\mathcal{H}|-I(\delta)$ by Lemma \ref{Lemma:HSB}.

\item $I(Y) \le \sum_{i < j} I(Y_{X_i,X_j})$.

\item $I(Y_{X_i,X_j}) \le \max_{H \in \mathcal{H}}2\frac{|H|}{|G|} \log \frac{|G|}{|H|}$.

\textit{Proof of item 3.}
By definition of $\mathsf{HSP}$, $Y_{X_i,X_j} = 1$ iff $X_i^{-1}X_j \in H$. Since $X_i,X_j$ are independent uniform variables over $G$, $(X_i)^{-1}X_j$ is also a uniform variable over $G$. Hence, 
\begin{equation*}
    Pr\{Y_{X_i,X_j}=1\} = Pr\{(X_i)^{-1}X_j \in H\} = \frac{|H|}{|G|},
\end{equation*}
so
\begin{equation*}
I(Y_{X_i,X_j}) \le \max\limits_{H \in \mathcal{H}}I(\frac{|H|}{|G|}) \le \max\limits_{H \in \mathcal{H}}2\frac{|H|}{|G|} \log \frac{|G|}{|H|},
\end{equation*} 
where the last inequality follows by Claim \ref{Claim:entropy}. 

\end{enumerate}
Combining these three steps implies 
\begin{equation*}
\binom{T}{2} > \frac{(1-\delta) \cdot \log |\mathcal{H}|-H(\delta)}{\max\limits_{H \in \mathcal{H}}2\frac{|H|}{|G|} \log \frac{|G|}{|H|}},
\end{equation*}
which means 
\begin{equation*}
T =\Omega\left(\min\limits_{H \in \mathcal{H}}\sqrt{\frac{|G|}{|H|}\frac{\log |\mathcal{H}|}{\log \frac{|G|}{|H|}}}\right).
\end{equation*}
Finally, we have $\Omega\left(\max\left\{\min\limits_{H \in \mathcal{H}}\frac{\log |\mathcal{H}|}{\log \frac{|G|}{|H|}},\min\limits_{H \in \mathcal{H}}\sqrt{\frac{|G|}{|H|}\frac{\log |\mathcal{H}|}{\log \frac{|G|}{|H|}}}\right\}\right)$.

\end{proof}

\begin{lemma}
\label{Lemma:HSB}
$I(\mathscr{H}:B) > (1-\delta)\log |\mathcal{H}|-H(\delta)$.
\end{lemma}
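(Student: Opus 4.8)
The plan is to apply Fano's inequality. Recall $\mathscr{H}$ is uniform over $\mathcal{H}$ and $B = B_1 \cdots B_T$ is the tuple of i.i.d.\ uniform examples obtained by the sample algorithm. A sample algorithm that solves $\mathsf{HSP}$ with bounded error $\delta$ produces, from $B$, a guess $\widehat{\mathscr{H}} = A(B)$ with $\Pr[\widehat{\mathscr{H}} \neq \mathscr{H}] \le \delta$. Since $\widehat{\mathscr{H}}$ is a (possibly randomized) function of $B$, the data-processing inequality gives $I(\mathscr{H} : B) \ge I(\mathscr{H} : \widehat{\mathscr{H}})$, so it suffices to lower-bound $I(\mathscr{H} : \widehat{\mathscr{H}})$.

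First I would write $I(\mathscr{H} : \widehat{\mathscr{H}}) = I(\mathscr{H}) - I(\mathscr{H} \mid \widehat{\mathscr{H}})$. Since $\mathscr{H}$ is uniform over $\mathcal{H}$, $I(\mathscr{H}) = \log |\mathcal{H}|$. Then I would bound the conditional entropy $I(\mathscr{H} \mid \widehat{\mathscr{H}})$ using Fano's inequality: letting $E$ be the indicator of the error event $\{\widehat{\mathscr{H}} \neq \mathscr{H}\}$, we have
\begin{equation*}
I(\mathscr{H} \mid \widehat{\mathscr{H}}) \le I(E) + \Pr[E] \log(|\mathcal{H}| - 1) \le I(\delta) + \delta \log |\mathcal{H}|,
\end{equation*}
where the second step uses $\Pr[E] \le \delta$, monotonicity of the relevant terms for $\delta < 1/2$, and $I(\cdot)$ denoting the binary entropy as defined in the preliminaries (matching the paper's notation $H(\delta)$ used in the theorem statement). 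Combining, $I(\mathscr{H} : B) \ge \log |\mathcal{H}| - I(\delta) - \delta \log |\mathcal{H}| = (1-\delta) \log |\mathcal{H}| - I(\delta)$, which is the claimed bound (the strict inequality coming from the slack in the data-processing or Fano step, or simply inherited as a non-strict bound that suffices for the asymptotic conclusion).

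The only mild subtlety — and the step I would be most careful about — is the exact form of Fano's inequality and making sure the binary-entropy term is invoked with the correct monotonicity: Fano gives $I(E) + \Pr[E]\log(|\mathcal{H}|-1)$ with $I$ evaluated at $\Pr[E]$, and one must use that $I(\cdot)$ is increasing on $[0, 1/2]$ together with $\Pr[E] \le \delta < 1/2$ to replace $\Pr[E]$ by $\delta$ throughout; there is also the trivial edge case $|\mathcal{H}| = 1$, where the bound is vacuous. Everything else is a routine chain of the Shannon-entropy inequalities listed in Section \ref{sec:Pre}.
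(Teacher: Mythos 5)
Your proposal is correct and follows essentially the same route as the paper: both apply Fano's inequality to bound the conditional entropy of $\mathscr{H}$ given the learner's output, combine it with $I(\mathscr{H})=\log|\mathcal{H}|$ from uniformity, and absorb $\delta\log(|\mathcal{H}|-1)$ into $\delta\log|\mathcal{H}|$. The only cosmetic difference is that you insert an explicit data-processing step through the estimator $\widehat{\mathscr{H}}$, whereas the paper states Fano directly for $I(\mathscr{H}\mid B)$; the substance is identical.
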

\begin{proof}
 Let random variable $\mathscr{H}_B$ be the hypothesis that the learner produces (given $B$). According to the setting of learning algorithms, it is required that $Pr\{\mathscr{H} \neq \mathscr{H}_B\} \le \delta$. By Fano inequality \cite{CT2001}, we have $I(\mathscr{H}|B)\le I(\delta)+\delta \log (|\mathcal{H}|-1)$. Thus, 
\begin{equation*}
\begin{aligned}
I(\mathscr{H}:B) &= I(\mathscr{H}) - I(\mathscr{H}|B) \\
&\ge \log |\mathcal{H}| - (I(\delta)+\delta \log (|\mathcal{H}|-1)) \\
&> (1-\delta)\log |\mathcal{H}|-I(\delta).
\end{aligned}
\end{equation*}
\end{proof}

\begin{claim}
\label{Claim:entropy}
For $0 < p \le \frac{1}{2}$, $-(1-p)\log (1-p) \le -p \log p$.
\end{claim}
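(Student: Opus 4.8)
Write $h(x) = -x\log x$ for $x \in (0,1]$ with the convention $h(1) = 0$; the claim is precisely $h(1-p) \le h(p)$ for $0 < p \le 1/2$. Since replacing $\log$ by the natural logarithm $\ln$ only rescales both sides by the positive constant $1/\ln 2$, it will suffice to prove the statement with $\ln$, and I will do that. The one structural fact to keep in mind is that $h$ is \emph{not} monotone on $(0,1)$ --- it increases on $(0,1/e)$ and decreases on $(1/e,1)$ --- so no single monotonicity argument can work, and the plan is to split the interval $(0,1/2]$ at $1/e$.

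On the range $0 < p \le 1/e$, I would use the elementary inequality $\ln(1+t) \le t$ with $t = p/(1-p) > 0$, which gives $\ln\frac{1}{1-p} \le \frac{p}{1-p}$, i.e. $-(1-p)\ln(1-p) \le p$; combined with $-p\ln p = p(-\ln p) \ge p$ (valid because $-\ln p \ge 1$ on this range) this chains to $-(1-p)\ln(1-p) \le p \le -p\ln p$, as desired.

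On the range $1/e \le p \le 1/2$, both $p$ and $1-p$ lie in $[1/e,1)$, where $h'(x) = -\ln x - 1 \le 0$, so $h$ is non-increasing; since $p \le 1-p$ this immediately gives $h(p) \ge h(1-p)$. Together the two ranges cover $(0,1/2]$ (with $p = 1/2$ giving equality and $p \to 0^+$ making both sides vanish), which finishes the proof.

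I do not expect any genuinely hard step here; the only thing to be careful about is the non-monotonicity of $-x\log x$, which is exactly what forces the case split at $1/e$. If a case-free argument is preferred, an alternative is to set $\psi(p) = p\ln p - (1-p)\ln(1-p)$, observe $\psi(0^+) = \psi(1/2) = 0$, and compute $\psi'(p) = \ln\!\big(p(1-p)\big) + 2$, which is negative near $0$ and positive near $1/2$; hence $\psi$ is unimodal with nonpositive endpoint values on $(0,1/2]$, so $\psi \le 0$ there, which is again the claim.
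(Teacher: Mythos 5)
Your main argument is correct and takes a genuinely different, more elementary route than the paper. The paper studies $f(p)=p\ln p-(1-p)\ln(1-p)$ directly: it computes $f''(p)=\frac{2p-1}{p(p-1)}\ge 0$ on $(0,\tfrac12]$, deduces that $f'$ increases and changes sign exactly once, and concludes $f\le 0$ from the boundary values $f(0^+)=f(\tfrac12)=0$ --- which is essentially the ``alternative'' you sketch in your last sentence (your unimodality step does need the observation that $\ln\bigl(p(1-p)\bigr)+2$ is increasing on $(0,\tfrac12]$ so that the sign changes only once, which is exactly the paper's second-derivative computation). Your primary proof instead splits at $1/e$ and replaces all of this with two one-line facts: $\ln(1+t)\le t$ gives $-(1-p)\ln(1-p)\le p\le -p\ln p$ on $(0,1/e]$, and on $[1/e,\tfrac12]$ both $p$ and $1-p$ land in the region where $h(x)=-x\ln x$ is non-increasing, so $h(p)\ge h(1-p)$ follows from $p\le 1-p$. (Both containments check out: for $p\in[1/e,\tfrac12]$ one has $1-p\in[\tfrac12,1-\tfrac1e]\subset[1/e,1)$.) What your version buys is the avoidance of second derivatives and of the limits $\lim_{x\to0}f(x)$ and $\lim_{x\to0}f'(x)$ that the paper must invoke; what the paper's version buys is a single uniform argument with no case split. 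Both are complete proofs of the claim.
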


\begin{proof}
Let $f(p) = p \ln p - (1-p) \ln (1-p)$, then $f'(p) = \ln p+\ln (1-p)+2$, so $f''(p) = \frac{2p-1}{p(p-1)}$.
When $0<p \le \frac{1}{2}$, $f''(p) \ge 0$ and thus $f'(p)$ is an increasing function.
Because $\lim\limits_{x\to 0}f'(x) <0$ and $f'(\frac{1}{2}) > 0$, there exists a point $p_0$ $(0 < p_0 < \frac{1}{2})$ such that $f'(p_0) = 0$. Thus, $f(p)$ is a decreasing function when $0<p \le p_0$ and $f(p)$ is an increasing function when $p_0 \le p \le \frac{1}{2}$.
Since $\lim\limits_{x\to 0}f(x) = 0$ and $f(\frac{1}{2}) = 0$, we have $f(p) \le 0$ for any $0<p\le \frac{1}{2}$, i.e, $(1-p) \ln (1-p) \ge p \ln p $. Thus, $-(1-p) \log (1-p) \le -p \log p $.
\end{proof}

\subsection{Upper bound}
\label{sec:upper}
In this section, we give an upper bound on the sample complexity of $\mathsf{HSP}$ by proving Theorem \ref{Theorem:hiddenupper}. Specifically, we propose Algorithm \ref{Sample algorithm1} to solve $\mathsf{HSP}$. Then we analyze the correctness and sample complexity of Algorithm \ref{Sample algorithm1} in Section \ref{sec:cor} and \ref{sec:com},  respectively. The number of examples used in Algorithm \ref{Sample algorithm1} is an upper bound on the sample complexity of $\mathsf{HSP}$.
\begin{theorem}[Upper bound]
\label{Theorem:hiddenupper}
The number of uniform examples sufficient to solve $\mathsf{HSP}$ with bounded error is $O\left(\max \left\{\mathsf{sr}(\mathcal{H}),\sqrt{\max\limits_{H \in \mathcal{H}}\frac{|G|}{|H|}\mathsf{sr}(\mathcal{H})}\right\}\right)$.
\end{theorem}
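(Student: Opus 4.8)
The plan is to prove the bound by exhibiting an explicit sample algorithm and counting the uniform examples it uses. The algorithm draws $T$ i.i.d.\ uniform examples $(x_1,f(x_1)),\dots,(x_T,f(x_T))$, computes the partition of $\{1,\dots,T\}$ given by $i\sim j\iff f(x_i)=f(x_j)$ — by the promise its classes are exactly the occupied cosets of the hidden subgroup $H$, so every collision produces an element of $H$ — collects $S=\{x_i^{-1}x_j:f(x_i)=f(x_j)\}\subseteq H$, and outputs $\langle S\rangle$. Since $H\in\mathcal{H}$, the algorithm is correct precisely when $\langle S\rangle=H$, so everything reduces to choosing $T=O\big(\max\{\mathsf{sr}(\mathcal{H}),\sqrt{c\cdot\mathsf{sr}(\mathcal{H})}\}\big)$, with $c:=\max_{H'\in\mathcal{H}}|G|/|H'|$ (a quantity the algorithm can evaluate without knowing $H$), so that $\Pr[\langle S\rangle=H]\ge 1-\delta$ for the target bounded error $\delta$. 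I would split the analysis into a counting step and a generation step.

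For the counting step, write $k$ for the number of occupied cosets among the samples, so that the number of ``excess'' samples is $T-k$. Conditioned on the coset profile $\pi$ (which coset each $x_i$ lies in) the samples are independent and uniform inside their cosets, and for any subgroup $M\le H$ a direct computation gives $\Pr[\,S\subseteq M\mid\pi\,]=[H:M]^{-(T-k)}$: within an occupied coset all of its samples must fall into a single one of the $[H:M]$ sub-cosets of $M$, and distinct occupied cosets are independent. So $T-k$ is the real resource, and the aim is: with the stated $T$, with probability $\ge 1-\delta/2$ one has $T-k\ge m_0$ for a suitable $m_0=\Theta(\mathsf{sr}(\mathcal{H}))$ (a fixed $\delta$ lets us absorb an additive $\log(1/\delta)$ into the constant). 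If $c=O(\mathsf{sr}(\mathcal{H}))$, take $T=\Theta(\mathsf{sr}(\mathcal{H}))$ with a large enough constant; then $T-k\ge T-c=\Omega(\mathsf{sr}(\mathcal{H}))$ holds deterministically, since $k\le c$. If instead $c$ exceeds $\mathsf{sr}(\mathcal{H})$ by a large constant factor, take $T=\Theta(\sqrt{c\cdot\mathsf{sr}(\mathcal{H})})$ (so $T\le c$); then the number $D$ of cosets hit at least twice — a negatively associated sum of indicators with $D\le T-k$ — satisfies $\mathbb{E}[D]=\Omega(T^2/c)=\Omega(\mathsf{sr}(\mathcal{H}))$, so a Chernoff bound gives $T-k\ge D=\Omega(\mathsf{sr}(\mathcal{H}))\ge m_0$ with probability $\ge 1-\delta/2$. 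Either way $T=O\big(\max\{\mathsf{sr}(\mathcal{H}),\sqrt{c\cdot\mathsf{sr}(\mathcal{H})}\}\big)$.

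For the generation step, I would split on the counting event and union-bound over the maximal subgroups $M$ of $H$: using $\Pr[\,S\subseteq M\mid\pi\,]=[H:M]^{-(T-k)}\le[H:M]^{-m_0}$ on the event $T-k\ge m_0$, one obtains $\Pr[\langle S\rangle\neq H]\le\Pr[T-k<m_0]+\sum_M[H:M]^{-m_0}\le\tfrac{\delta}{2}+\sum_M[H:M]^{-m_0}$. It then remains to show $\sum_M[H:M]^{-m_0}\le\tfrac{\delta}{2}$ once $m_0=\Omega(\mathsf{r}(H)+\log(1/\delta))$, which is compatible with the counting step because $\mathsf{r}(H)\le\mathsf{sr}(\mathcal{H})$. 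For abelian $H$ — in particular for $\mathsf{rAHSP}$ and $\mathsf{GSP}$ — this last point is elementary: for each prime $p\mid|H|$ there are at most $p^{\mathsf{r}(H)}$ maximal subgroups of index $p$, so the sum is at most $\sum_{p\,\text{prime}}p^{\mathsf{r}(H)-m_0}$, which drops below $\delta/2$ for $m_0=\mathsf{r}(H)+O(\log(1/\delta))$. Combining the two steps gives $\Pr[\langle S\rangle=H]\ge 1-\delta$ with $T=O\big(\max\{\mathsf{sr}(\mathcal{H}),\sqrt{c\cdot\mathsf{sr}(\mathcal{H})}\}\big)$, which is the claimed bound after substituting $c=\max_{H'\in\mathcal{H}}|G|/|H'|$.

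The main obstacle is the generation step for a general, possibly non-abelian $G$: one must bound $\sum_M[H:M]^{-m_0}$ over maximal $M\le H$, i.e.\ control the number $m_j(H)$ of maximal subgroups of index $j$ well enough that $\sum_{j\ge 2}m_j(H)\,j^{-m_0}$ is small as soon as $m_0=\Omega(\mathsf{r}(H))$; in contrast with the abelian case (where only the trivial hyperplane count is used) this is genuinely group-theoretic and rests on polynomial maximal-subgroup-growth bounds in terms of the rank. A secondary but still delicate point is making the conditioning in the counting step rigorous — that conditioned on $\pi$ the excess-sample count really governs everything as claimed, and that this count concentrates — and checking that the two-regime choice of $T$ collapses uniformly (including at the seam between the regimes) to $O\big(\max\{\mathsf{sr}(\mathcal{H}),\sqrt{\max_{H'\in\mathcal{H}}(|G|/|H'|)\,\mathsf{sr}(\mathcal{H})}\}\big)$.
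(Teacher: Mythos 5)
Your algorithm and analysis take a genuinely different route from the paper's. The paper's Algorithm 1 uses a bipartite sampling structure (one large set $P$ tested against $9\,\mathsf{sr}(\mathcal{H})$ small sets $Q_i$) so that each of $\Theta(\mathsf{sr}(\mathcal{H}))$ independent rounds produces, with constant probability, one collision element of $H$; it then argues that these elements generate $H$ with probability $>1/4$ via a strictly increasing chain $H_0<H_1<\cdots$, and amplifies by $O(\log(1/\delta))$ independent repetitions of the whole phase. You instead draw a single batch, harvest \emph{all} collision elements, and union-bound the failure probability over the maximal subgroups $M<H$ using the clean conditional identity $\Pr[S\subseteq M\mid\pi]=[H:M]^{-(T-k)}$. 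Your counting step (getting $T-k=\Omega(\mathsf{sr}(\mathcal{H}))$ excess samples from either the pigeonhole bound $k\le |G|/|H|$ or a birthday/Chernoff argument on the number of doubly-hit cosets) is routine and sound, and for abelian hidden subgroups — hence for $\mathsf{rAHSP}$ and $\mathsf{GSP}$ — your hyperplane count closes the argument, so in that setting your proof is essentially complete and arguably cleaner than the paper's.

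The gap is the one you named: for general finite $H$ your generation step needs $\sum_{M}[H:M]^{-m_0}\le\delta/2$ over maximal $M<H$ with $m_0=O(\mathsf{r}(H)+\log(1/\delta))$, i.e.\ a bound of the shape $m_j(H)\le j^{O(\mathsf{r}(H))}$ on the number of maximal subgroups of index $j$. This is a real theorem about polynomial maximal subgroup growth in terms of the number of generators, not an elementary computation (its known proofs for arbitrary finite groups go through the classification of finite simple groups), and without citing or proving it your argument only establishes the theorem for abelian $\mathcal{H}$, whereas the statement is for arbitrary finite $G$. It is worth noting that the paper's own proof does not fully dispatch this obstacle either: its chain argument lower-bounds the probability that each successive uniform element of $H$ escapes the subgroup generated so far, and then treats "$\mathsf{r}(H)$ strict increases" as equivalent to "generates $H$." That equivalence holds when every maximal chain below $H$ has length $\mathsf{r}(H)$ (e.g.\ elementary abelian $H$, as in $\mathsf{GSP}$), but fails in general — already for $H=\mathbb{Z}_4$ one uniform element generates $H$ only with probability $1/2$, not $1-1/|H|$. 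So the difficulty you isolated is genuine; to make either proof work for all finite groups one needs an actual quantitative statement that $O(\mathsf{r}(H)+\log(1/\delta))$ independent uniform elements of $H$ generate $H$ with probability $1-\delta$, which is exactly the maximal-subgroup-counting input you left open.
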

Algorithm \ref{Sample algorithm1} is shown as follows. In Algorithm \ref{Sample algorithm1}, if $\max\limits_{H\in \mathcal{H}}\frac{|G|}{|H|} > \mathsf{sr}(\mathcal{H})$, let $A = \left\lceil 9\sqrt{\max\limits_{H\in \mathcal{H}}\frac{|G|}{|H|}\mathsf{sr}(\mathcal{H})} \right\rceil$ and $B = \left\lceil\sqrt{\max\limits_{H\in \mathcal{H}}\frac{|G|}{|H|} / \mathsf{sr}(\mathcal{H})}\right\rceil$; if $\max\limits_{H\in \mathcal{H}}\frac{|G|}{|H|} \le \mathsf{sr}(\mathcal{H})$, let $A =  9\max\limits_{H\in \mathcal{H}}\frac{|G|}{|H|}$ and $B = 1$. In this way, we always have $AB \ge 9\max\limits_{H \in \mathcal{H}} \frac{|G|}{|H|}$. If two examples $(a,f(a))$ and $(b,f(b))$ satisfies $f(a) = f(b)$, then we say these two examples collide and call $(a,b)$ a collision pair. Furthermore, if there exists at least a collision pair between two example sets $P$ and $Q$, then we also say that $P$ and $Q$ collide.

\begin{algorithm}
\caption{The sample algorithm of $\mathsf{HSP}$}
\label{Sample algorithm1}
\begin{enumerate}
\item Let $W=\emptyset$.
\item Sample $A$ times to obtain an example set $P$.
\item For $1 \le i \le 9 \cdot\mathsf{sr}(\mathcal{H})$, sample $B$ times to obtain an example set $Q_i$. If $P$ and $Q_i$ collide, then we randomly select a collision pair $(a_i,b_i)$ such that $(a_i,f(a_i))\in P$, $(b_i,f(b_i)) \in Q_i$, and insert $a_i^{-1}b_i$ into set $W$. 
\item Repeat Step 2-3 $\lceil\frac{\ln \frac{1}{\delta}}{\ln \frac{6}{5}}\rceil$ times, return $\langle W \rangle$.
\end{enumerate}
\end{algorithm}

\subsubsection{Correctness Analysis}
\label{sec:cor}
In this section, Our goal is to prove that the probability of Algorithm \ref{Sample algorithm1} failing is no more than $\delta$. By the definition of $\mathsf{HSP}$, we have $f(a_i) = f(b_i)$ if and only if $a_i^{-1}b_i \in H$ for any $i$. Thus, any element added into $W$ in Step 3 is an element in $H$. In the following, it suffices to prove that the probability that $W$ is a generating set of  $H$ is not less than $1-\delta$.

Let $N = \frac{|G|}{|H|}$. We call each execution of Step 2-3 an iteration. In Step 2, let $\zeta_{i,j,l}$ be the indicator random variable for the event that $P_l$ and $Q_{i,j}$ collide, where $P_l$ is the $l$-th sample in $P$ and $Q_{i,j}$ is the $j$-th sample in $Q_i$. Let $\zeta_{i} = \sum_{j,l} \zeta_{i,j,l}$. Then $E(\zeta_{i,j,l}) = Pr\{\zeta_{i,j,l} = 1\}=\frac{1}{N}$ and $D(\zeta_{i,j,l}) =\frac{1}{N}(1-\frac{1}{N})$ for any $i,j,l$. Thus, $E(\zeta_{i}) = \frac{AB}{N}$. Since $\zeta_{i,j_1,l_1}$ and $\zeta_{i,j_2,l_2}$ are independent for any $(j_1,l_1) \neq (j_2,l_2)$, by Chebyshev's Inequality, we have 
\begin{equation*}
\begin{aligned}
Pr\left\{|\frac{\zeta_{i}}{AB}-\frac{1}{N}| \ge \frac{2}{3}\frac{1}{N}\right\}  
&\le \frac{D(\frac{1}{AB}\zeta_{i})}{(\frac{2}{3}\frac{1}{N})^2} & 
&\le\frac{D(\zeta_i)}{(9N)^2(\frac{2}{3}\frac{1}{N})^2} \\
&=\frac{\sum_{j,l}D(\zeta_{i,j,l})}{(9N)^2(\frac{2}{3}\frac{1}{N})^2} &
&=\frac{D(\zeta_{1,1,1})}{9N(\frac{2}{3}\frac{1}{N})^2} \\
&=\frac{\frac{1}{N}(1-\frac{1}{N})}{9N(\frac{2}{3}\frac{1}{N})^2} & &< 1/4.
\end{aligned}
\end{equation*}
Since $AB \ge 9N$,
\begin{equation*}
\begin{aligned}
Pr\{\zeta_i \ge 3\} &\ge Pr\{\zeta_i \ge \frac{AB}{3N}\} \\
&= Pr\{\frac{\zeta_i}{AB} \ge \frac{1}{3}\frac{1}{N}\} \\
&> 1-\frac{1}{4} \\
&= \frac{3}{4},
\end{aligned}
\end{equation*}
so the probability that $P$ and $Q_{i}$ collide is large than $\frac{3}{4}$. 

Let $\beta_i$ be the indicator random variable for the event that $P$ and $Q_i$ collide. Then $E(\beta_i) > \frac{3}{4}$ and $D(\beta_i)=E(\beta_i)(1-E(\beta_i))$ for any $i$. Let $\beta = \sum_{i=1}^{9\mathsf{sr}(\mathcal{H})} \beta_i$. Since $\beta_{i}$ and $\beta_{j}$ are independent for any $i \neq j$, by Chebyshev's Inequality, we have
\begin{equation*}
\begin{aligned}
Pr\{|\frac{\beta}{9\mathsf{sr}(\mathcal{H})}- E(\beta_1)| \ge \frac{2}{3}E(\beta_1)\} &\le \frac{D(\frac{1}{9\mathsf{sr}(\mathcal{H})}\beta)}{(\frac{2}{3}E(\beta_1))^2} &
&= \frac{D(\beta)}{(9\mathsf{sr}(\mathcal{H}))^2(\frac{2}{3}E(\beta_{1}))^2} \\
&= \frac{\sum_{i=1}^{9\mathsf{sr}(\mathcal{H})}D(\beta_i)}{(9\mathsf{sr}(\mathcal{H}))^2(\frac{2}{3}E(\beta_{1}))^2} &
&=\frac{D(\beta_1)}{(9\mathsf{sr}(\mathcal{H}))(\frac{2}{3}E(\beta_{1}))^2} \\
&=\frac{(1-E(\beta_{1}))}{(9\mathsf{sr}(\mathcal{H}))(\frac{4}{9}E(\beta_{1}))} & &< \frac{1}{3\mathsf{sr}(\mathcal{H})}. 
\end{aligned}
\end{equation*}
Thus, 
\begin{equation*}
\begin{aligned}
Pr\{\beta \ge \mathsf{r}(H)\} &\ge Pr\{\beta \ge \mathsf{sr}(\mathcal{H})\} \\
&\ge Pr\{\beta \ge \frac{9 \mathsf{sr}(\mathcal{H})}{4}\} \\
&= Pr\{\beta \ge \frac{9 \mathsf{sr}(\mathcal{H})\cdot E\beta_{1}}{3}\} \\
&= Pr\{\frac{\beta}{9 \mathsf{sr}(\mathcal{H})} \ge \frac{1}{3}E(\beta_1)\} \\
&> 1-\frac{1}{3\mathsf{sr}(\mathcal{H})},
\end{aligned}
\end{equation*}
which means the probability that Algorithm \ref{Sample algorithm1} finds at least $\mathsf{r}(H)$ elements in $H$ is at least $1-\frac{1}{3\mathsf{sr}(\mathcal{H})}$ in each iteration.

Let $E$ be the event that $\mathsf{r}(H)$ independent elements in $H$ construct a generating set of $H$. Let $H_i$ be a subgroup of $H$ generated by the first $i$ elements ($1 \le i \le \mathsf{r}(H)$) and $H_0$ be the trivial subgroup. If $E$ happens, then $H_{i-1} < H_i$ for any $i$. By Lagrange's Theorem, we have $\frac{|H_{i}|}{|H_{i-1}|} \ge 2$. Thus, $\frac{|H|}{|H_i|}  \ge 2^{\mathsf{r}(H)-i}$. In this way, the probability that $E$ happens is 
\begin{equation*}
(1-\frac{|H_0|}{|H|})(1-\frac{|H_1|}{|H|})\cdots (1-\frac{|H_{r(H)-1}|}{|H|}) \ge 
(1-\frac{1}{2^{\mathsf{r}(H)}})(1-\frac{1}{2^{\mathsf{r}(H)-1}})\cdots (1-\frac{1}{2}) > \frac{1}{4},
\end{equation*}
where the last inequality comes from \cite{Watrous2006Simon}. 
Therefore, the probability that the elements added into $W$ in each iterator construct a generating set of $H$ is at least $(1-\frac{1}{3\mathsf{sr}(\mathcal{H})})\frac{1}{4} \ge \frac{2}{3}\cdot\frac{1}{4}=\frac{1}{6}$. As a result, after repeating $\lceil\frac{\ln \frac{1}{\delta}}{\ln \frac{6}{5}}\rceil \ge \log_{\frac{5}{6}} \delta$ times, the probability that $W$ is not a generating set of $H$ is no more than $(1-\frac{1}{6})^{\log_{\frac{5}{6}} \delta} = \delta$, i.e., Algorithm \ref{Sample algorithm1} succeeds with probability at least $1-\delta$.

\subsubsection{Complexity Analysis}
\label{sec:com}
The sample complexity of Algorithm \ref{Sample algorithm1} is $(A+9B \mathsf{sr}(\mathcal{H}))\left\lceil\frac{\ln \frac{1}{\delta}}{\ln \frac{6}{5}}\right\rceil$.
If $\max\limits_{H \in \mathcal{H}}\frac{|G|}{|H|} > \mathsf{sr}(\mathcal{H})$, then $A = \left\lceil 9\sqrt{\max\limits_{H \in \mathcal{H}}\frac{|G|}{|H|} \mathsf{sr}(\mathcal{H})}\right\rceil, B = \left\lceil\sqrt{\max\limits_{H \in \mathcal{H}}\frac{|G|}{|H|} / \mathsf{sr}(\mathcal{H})}\right\rceil$, and thus the sample complexity is $O\left(\sqrt{\max\limits_{H \in \mathcal{H}}\frac{|G|}{|H|} \mathsf{sr}(\mathcal{H})}\right)$;
if $\max\limits_{H \in \mathcal{H}}\frac{|G|}{|H|} \le \mathsf{sr}(\mathcal{H})$, then $A = \max\limits_{H \in \mathcal{H}}\frac{|G|}{|H|}, B = 1$, and thus the sample complexity is $O\left(\mathsf{sr}(\mathcal{H})\right)$. Since $\mathsf{sr}(\mathcal{H}) \ge \sqrt{\max\limits_{H \in \mathcal{H}}\frac{|G|}{|H|}\mathsf{sr}(\mathcal{H})}$ is equivalent to $\mathsf{sr}(\mathcal{H}) \ge \max\limits_{H \in \mathcal{H}}\frac{|G|}{|H|}$, the sample complexity of Algorithm \ref{Sample algorithm1} can be expressed as $O\left(\max \left\{\mathsf{sr}(\mathcal{H}),\sqrt{\max\limits_{H \in \mathcal{H}}\frac{|G|}{|H|}\mathsf{sr}(\mathcal{H})}\right\}\right)$ equivalently, i.e., Theorem \ref{Theorem:hiddenupper} is proved.

\section{Application}
\label{sec:Application}
The  results in Section \ref{sec:Bounds} can be applied to some more specific classes of $\mathsf{HSP}$, including $\mathsf{rAHSP}$ and $\mathsf{GSP}$ defined in Section \ref{sec:resut}. We show the sample complexity of these problems by proving Corollary \ref{corollary:ahsp} and \ref{corollary:gsp}.


\subsection{Abelian hidden subgroup problem}
\begin{proof}[Proof of Corollary \ref{corollary:ahsp}]
In $\mathsf{rAHSP}$, $G = \mathbb{Z}_{p_1}^{n_1} \times \mathbb{Z}_{p_2}^{n_2} \times \cdots \times \mathbb{Z}_{p_m}^{n_m}$, and thus 
$|G| = \prod_{i=1}^m p_i^{n_i}$. Since $H = H_1 \times H_2 \times \cdots \times H_m$, where $H_i \le \mathbb{Z}_{p_i}^{n_i}$ and $\mathsf{r}(H_i) = k_i$ for any $i \in [m]$, we have $|H| = \prod_{i=1}^m p_i^{k_i}$. Hence,
\begin{equation}
\label{eq1}
\frac{|G|}{|H|} = \prod_{i=1}^m p_i^{n_i-k_i},
\end{equation}
so
\begin{equation*}
\log \frac{|G|}{|H|} = \sum_{i=1}^m (n_i-k_i)\log p_i.
\end{equation*}
By a counting method \cite{Stanley2000Enumerative}, the number of subgroup of rank $k$ in $\mathbb{Z}_{p}^{n}$ is 
\begin{equation*}
\label{eq:counting}
\prod_{j=0}^{k-1} \frac{p^{n}-p^j}{p^{k}-p^j} > p^{(n-k)k}.
\end{equation*}
As a result, 
\begin{equation*}
|\mathcal{H}| = \prod_{i=1}^m \prod_{j=0}^{k_i-1} \frac{p_i^{n_i}-p_i^j}{p_i^{k_i}-p_i^j} > \prod_{i=1}^m p_i^{(n_i-k_i)k_i}.
\end{equation*}
Thus
\begin{equation*}
\log |\mathcal{H}| > \sum_{i=1}^m (n_i-k_i)k_i \log p_i.
\end{equation*}
Therefore, 
\begin{equation}
\label{eq2}
\frac{\log |\mathcal{H}|}{\log \frac{|G|}{|H|}} > \frac{\sum_{i=1}^m (n_i-k_i)k_i \log p_i}{\sum_{i=1}^m (n_i-k_i)\log p_i} \ge \min\limits_{i\in [m]} k_i. 
\end{equation}
Moreover, by Claim \ref{Claim:count}, $\mathsf{r}(H) \le \max_{i \in [m]} k_i$ for any $H \in \mathcal{H}$, so 
\begin{equation}
\label{eq3}
\mathsf{sr}(\mathcal{H}) \le \max_{i \in [m]} k_i.
\end{equation}
By substituting \cref{eq1,eq2,eq3} into Theorem \ref{Theorem:main}, we obtain that the number of uniform examples required for learning the hidden subgroup with bounded error is at least $\Omega\left(\max\left\{\min\limits_{i\in [m]} k_i, \min\limits_{i\in [m]} \sqrt{k_i \prod_{j=1}^m p_j^{n_j-k_j}}\right\}\right)$ and at most $O\left(\max\limits_{i\in [m]}\left\{k_i, \sqrt{k_i \prod_{j=1}^m p_j^{n_j-k_j}}\right\} \right)$ for $\mathsf{rAHSP}$.
\end{proof}
\begin{claim}
\label{Claim:count}
For a finite group $G = G_1 \times G_2 \times \cdots \times G_m$, $\mathsf{r}(G) \le \max_{i \in [m]} \mathsf{r}(G_i)$.
\end{claim}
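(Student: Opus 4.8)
The plan is a ``column‑stacking'' argument driven by the Chinese Remainder Theorem. One caveat up front: the statement as written is only correct when the factor orders are pairwise coprime (for instance $\mathsf{r}(\mathbb{Z}_2\times\mathbb{Z}_2)=2>1=\max\mathsf{r}(\mathbb{Z}_2)$), and this is exactly the regime in which the claim is used here — in $\mathsf{rAHSP}$ it is applied to $H=H_1\times\cdots\times H_m$ with $|H_i|=p_i^{k_i}$ for distinct primes $p_i$, so the orders $|H_i|$ are pairwise coprime. I would therefore prove it under the hypothesis $\gcd(|G_i|,|G_j|)=1$ for all $i\neq j$.

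Set $r:=\max_{i\in[m]}\mathsf{r}(G_i)$. For each $i\in[m]$ fix a generating set $S_i=\{s_{i,1},\dots,s_{i,r}\}$ of $G_i$ of size exactly $r$; when $\mathsf{r}(G_i)<r$ this is obtained from a minimal generating set by appending copies of the identity of $G_i$, which does not change the generated subgroup. Now define $r$ elements of $G$ by stacking the $j$‑th entries together: $t_j:=(s_{1,j},s_{2,j},\dots,s_{m,j})$ for $j\in[r]$. The goal is to show $\langle t_1,\dots,t_r\rangle=G$, which immediately yields $\mathsf{r}(G)\le r=\max_{i}\mathsf{r}(G_i)$.

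To establish this, fix $i\in[m]$ and, using pairwise coprimality, pick (by CRT) an integer $N_i$ with $N_i\equiv 1\pmod{|G_i|}$ and $N_i\equiv 0\pmod{|G_\ell|}$ for every $\ell\neq i$. For each $j$, raising $t_j$ to the power $N_i$ acts coordinatewise: in coordinate $\ell\neq i$ the order of $s_{\ell,j}$ divides $|G_\ell|\mid N_i$, so that entry becomes the identity, while in coordinate $i$ the order of $s_{i,j}$ divides $|G_i|$ and $N_i\equiv1$ modulo it, so that entry is unchanged. Hence $t_j^{N_i}=(e,\dots,e,s_{i,j},e,\dots,e)$, the element of $G$ that is $s_{i,j}$ in the $i$‑th coordinate and trivial elsewhere. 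Therefore $\langle t_1,\dots,t_r\rangle$ contains $\langle t_1^{N_i},\dots,t_r^{N_i}\rangle=\{e\}\times\cdots\times\langle s_{i,1},\dots,s_{i,r}\rangle\times\cdots\times\{e\}=\{e\}\times\cdots\times G_i\times\cdots\times\{e\}$. Letting $i$ range over $[m]$, the subgroup $\langle t_1,\dots,t_r\rangle$ contains every one of the $m$ ``coordinate'' copies of the $G_i$'s, and the product of these is all of $G$; thus $\langle t_1,\dots,t_r\rangle=G$.

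The only real obstacle is the coprimality hypothesis — without it the construction breaks down (the power trick cannot isolate a single coordinate), and indeed the inequality itself fails — so the honest fix is to state the claim with pairwise coprime factor orders, which is harmless since that is precisely what holds in the application. Everything else is routine: the padding of generating sets by identities, and the elementary exponent arithmetic justifying $t_j^{N_i}$. (Note the argument does not even use commutativity, so the sharpened claim holds for arbitrary finite groups with pairwise coprime orders, though in this paper all factors are abelian.)
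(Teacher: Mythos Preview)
Your construction is exactly the paper's: pad each minimal generating set of $G_i$ to length $r=\max_i\mathsf{r}(G_i)$ with identities and take the $r$ ``column'' tuples $t_j=(s_{1,j},\dots,s_{m,j})$ as putative generators of $G$. The difference is that the paper stops at the assertion that because the $i$-th components of these tuples generate $G_i$, the tuples themselves generate $G$; as you correctly observe, this implication is false without an extra hypothesis (your $\mathbb{Z}_2\times\mathbb{Z}_2$ counterexample breaks both the claim as stated and that step of the paper's argument). Your CRT exponent trick is precisely the missing ingredient: raising $t_j$ to a power congruent to $1$ mod $|G_i|$ and $0$ mod the other $|G_\ell|$ isolates the $i$-th coordinate inside $\langle t_1,\dots,t_r\rangle$, and this is exactly where pairwise coprimality of the factor orders is used. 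So your proof is the paper's plan with the gap closed and the needed hypothesis made explicit; since in the intended application to $\mathsf{rAHSP}$ the $p_i$ are distinct primes and hence the $|H_i|=p_i^{k_i}$ are pairwise coprime, nothing is lost.
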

\begin{proof}
Suppose $\mathsf{r}(G_i) = r_i$ and $T_i = \{T_{i1},...,T_{ir_i}\}$ is a generating set of $G_i$ for $i \in [m]$. In the following, we try to construct a generating set of $G$.
For $1 \le i \le m$, let 
\begin{equation*}
    T'_{ij} =
    \begin{cases}
    T'_{ij},& j \le r_i \\
    e_i,& r_i < j \le \max_{i \in [m]}r_i
    \end{cases},
\end{equation*}
where $e_i$ is the identity element of $G_i$.
For $1 \le j \le \max_{i\in [m]}r_i$, let $s_j = (T'_{1j},...,T'_{mj})$ and $S = \{s_1,...,s_{\max_{i \in [m]}r_i}\}$. Let $T'_i$ denote the set of the $i$-th component of the elements in $S$, i.e., $T'_i = \{T'_{i1},...,T'_{i\max_{i \in [m]}r_i}\}$.
Since $T'_i = T_i \cup \{e_i\}$, $T'_i$ is also a generating set of $G_i$. Thus, $S$ is a generating set of $G$, which means $\mathsf{r}(G) \le \max_{i \in [m]} r_i = \max_{i \in [m]} \mathsf{r}(G_i)$.
\end{proof}

\subsection{Generalized Simon's Problem}
\label{sec:gsp}
\begin{proof}[Proof of Corollary \ref{corollary:gsp}]
By substituting $i = 1$, $p_1 = p$, $n_1 = n$, $k_1 = k$ into Corollary \ref{corollary:ahsp}, we find that  the sample complexity of $\mathsf{GSP}$ is at least $\Omega\left(\max\left\{k, \sqrt{k \cdot p^{n-k}}\right\}\right)$ and at most $O\left(\max\left\{k, \sqrt{k \cdot p^{n-k}}\right\}\right)$, i.e, the sample complexity of $\mathsf{GSP}$ is $\Theta\left(\max\left\{k, \sqrt{k \cdot p^{n-k}}\right\}\right)$.
\end{proof}



\section{Conclusion}
\label{sec:Conclusion}
In this paper, we have discussed the classical sample complexity of the hidden subgroup problem ($\mathsf{HSP}$) over finite groups. We have shown the classical sample complexity of $\mathsf{HSP}$ is at least $\Omega\left(\max \left\{\min\limits_{H \in\mathcal{H}}\frac{\log |\mathcal{H}|}{\log \frac{|G|}{|H|}}, \min\limits_{H \in\mathcal{H}}\sqrt{\frac{|G|}{|H|}\frac{\log |\mathcal{H}|}{\log \frac{|G|}{|H|}}}\right\}\right)$ and at most $O\left(\max\limits_{H \in \mathcal{H}} \left\{sr(\mathcal{H}),\sqrt{\frac{|G|}{|H|}sr(\mathcal{H})}\right\}\right)$. Our result may be helpful to clarify the gap between quantum computing and classical computing on this problem. Furthermore, we have applied the result to obtain the sample complexity of some concrete instances of hidden subgroup problem. Particularly, we have obtained a tight bound $\Theta\left(\max\left\{k, \sqrt{k \cdot p^{n-k}}\right\}\right)$ for the sample complexity of $\mathsf{GSP}$. In the future, we will generalize our results to more instances of the hidden subgroup problem, especially for the non-Abelian case. We also believe the information-theoretic approach to obtain the lower bound in this paper will have further application in other learning problems.

\bibliography{GSP}

\end{document}